\tikzstyle{vertex}=[circle, draw, inner sep=0pt, minimum size=4pt]
\tikzstyle{tvertex}=[circle, draw, inner sep=0pt, minimum size=2.5pt]
\newcommand{\vertex}{\node[vertex]}
\newcommand{\tvertex}{\node[tvertex]}
\newcommand{\ket}[1]{| #1 \rangle}
\newcommand{\braket}[2]{\langle #1 | #2 \rangle}
\newcommand{\bb}[1]{\mathbb{#1}}
\newcommand{\cl}[1]{\mathcal{#1}}
\title{The Minimum Size of Qubit Unextendible Product Bases}
\author{Nathaniel Johnston}
\affil{Institute for Quantum Computing, University of Waterloo\\
  Waterloo, Ontario N2L~3G1, Canada\\
  \texttt{nathaniel.johnston@uwaterloo.ca}}
\authorrunning{N. Johnston} 
\subjclass{G.2.3 Applications}
\keywords{unextendible product basis; quantum entanglement; graph factorization}
\begin{document}

\maketitle

\begin{abstract}
	We investigate the problem of constructing unextendible product bases in the qubit case -- that is, when each local dimension equals $2$. The cardinality of the smallest unextendible product basis is known in all qubit cases except when the number of parties is a multiple of $4$ greater than $4$ itself. We construct small unextendible product bases in all of the remaining open cases, and we use graph theory techniques to produce a computer-assisted proof that our constructions are indeed the smallest possible.
\end{abstract}

\section{Introduction}

Unextendible product bases play a rather diverse and important role in quantum information theory \cite{DMSST03}. While their original motivation was for the construction of bound entangled states \cite{BDMSST99,LSM11,Sko11b}, they have also been used to build indecomposible positive maps \cite{Ter01}, to demonstrate Bell inequalities without a quantum violation \cite{ASHKLA11}, and demonstrate the existence of nonlocality without entanglement \cite{BDFMRSSW99}.

Furthermore, in the qubit case (i.e., the case where each local space has dimension $2$), it has been shown that unextendible product bases can be used to construct tight Bell inequalities with no quantum violation \cite{AFKKPLA12} and subspaces of small dimension that are locally indistinguishable \cite{DXY10}. It is the qubit case that we focus on in the present paper. In particular, we consider the question of how small a qubit unextendible product basis can be.

The minimum cardinality of a qubit unextendible product basis on $p$ qubits is well-known to equal $p+1$ when $p$ is odd \cite{AL01}. When $p$ is even, however, the problem is more difficult. It was shown in \cite{Fen06} that the minimum cardinality equals $p+2$ when $p = 4$ or $p \equiv 2 \, (\text{mod } 4)$. Our contribution is to solve the remaining cases (i.e., when $p \geq 8$ and $p \equiv 0 \, (\text{mod } 4)$ -- more specifically, we show that the minimum cardinality is $p+3$ when $p = 8$ and $p+4$ in all other cases.

Our approach is as follows: we formally introduce the mathematical preliminaries and graph theory techniques that we make use of in Section~\ref{sec:prelims}. We construct unextendible product bases of the claimed cardinality in Section~\ref{sec:construction}. Finally, Section~\ref{sec:minimality_proof} is devoted to the proof that there does not exist a smaller unextendible product basis in these cases.

\section{Unextendible Product Bases and Orthogonality Graphs}\label{sec:prelims}

A pure quantum state is represented by a unit vector $\ket{v} \in \bb{C}^{d_1} \otimes \cdots \otimes \bb{C}^{d_p}$ (and in our setting, $d_1 = \cdots = d_p = 2$ always). We say that $\ket{v}$ is a \emph{product state} if we can write it in the form
\begin{align*}
	\ket{v} = \ket{v_1} \otimes \cdots \otimes \ket{v_p} \ \ \text{ with } \ \ \ket{v_j} \in \bb{C}^{2} \ \forall \, j.
\end{align*}

An \emph{unextendible product basis (UPB)} is an orthonormal set $\cl{S} \subseteq (\bb{C}^2)^{\otimes p}$ of product states such that there is no product state orthogonal to every member of $\cl{S}$. It is clear that every UPB in $(\bb{C}^2)^{\otimes p}$ contains at least $p+1$ states -- if it contained only $p$ product states $\ket{v_0},\ldots,\ket{v_{p-1}}$ then we could construct another product state that is, for each $0 \leq j < p$, orthogonal to $\ket{v_j}$ on the $(j+1)$-th party and thus violate unextendibility.

It turns out that the trivial lower bound of $p+1$ states can be attained when $p$ is odd, and can almost be attained when $p$ is even, as indicated by our main result:
\begin{theorem}\label{thm:main}
	Let $f(p)$ be the smallest possible number of states in a UPB in $(\bb{C}^2)^{\otimes p}$. Then:
	\begin{enumerate}[(a)]
		\item if $p$ is odd then $f(p) = p + 1$;
		\item if $p = 4$ or $p \equiv 2 \, (\text{mod } 4)$ then $f(p) = p + 2$;
		\item if $p = 8$ then $f(p) = p + 3$;
		\item otherwise, $f(p) = p + 4$.
	\end{enumerate}
\end{theorem}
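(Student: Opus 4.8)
The plan is to establish the theorem in two halves for each residue class: a construction giving the upper bound, and a minimality argument giving the matching lower bound. Part (a) and part (b) are already known from the cited work, so I would focus the real effort on parts (c) and (d), where $p \equiv 0 \pmod 4$ with $p \geq 8$. The unifying tool will be the \emph{orthogonality graph} of a candidate product set: vertices are the states, and I would encode on each edge the set of parties on which the two incident states are orthogonal (they must be orthogonal on at least one party to form an orthonormal set). Unextendibility then translates into a combinatorial covering condition — there must be no way to pick, for each party, a local vector orthogonal to ``one side'' of every state — which is why the problem reduces to a question about edge-colorings or factorizations of graphs. I expect the bulk of Section~\ref{sec:minimality_proof} to hinge on this translation.

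For the upper bounds in (c) and (d), I would first build explicit UPBs. The natural idea is to start from the known optimal small UPBs in low even dimensions (e.g.\ the $p+2$ constructions for $p \equiv 2 \pmod 4$) and combine or ``tensor up'' them to reach multiples of $4$, tracking carefully how the cardinality grows so that I land at exactly $p+3$ for $p=8$ and $p+4$ for larger multiples of $4$. Concretely, I would exhibit a family of local vectors, organized so that each party's states fall into orthogonal pairs in a controlled pattern, and then verify unextendibility directly by checking that no product state survives the orthogonality constraints. The verification that these sets are genuinely unextendible (not just orthonormal) is routine but must be done with care, since missing a single extending product state invalidates the whole construction.

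The heart of the paper — and the main obstacle — is the lower bound: showing that no UPB in $(\bb{C}^2)^{\otimes p}$ can be smaller than the constructed size when $p \equiv 0 \pmod 4$. Here I would argue by contradiction: assume a UPB of size $p+1$, $p+2$, or (for $p \neq 8$) $p+3$ exists, and derive structural constraints on its orthogonality graph. The key step is to show that on each party the orthogonality relations partition the states in a way that forces a near-perfect matching, so the graph decomposes into roughly $p$ perfect or near-perfect matchings (a graph factorization). A parity obstruction — essentially that an odd number of states cannot be matched up on a single qubit, combined with the global constraint from $p \equiv 0 \pmod 4$ — should rule out the smaller cardinalities. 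I anticipate that closing this argument cleanly for all $p$ simultaneously is delicate, which is presumably why the author resorts to a computer-assisted search to eliminate the finitely many residual configurations; the mathematical content is in reducing the infinite family to a finite, checkable case analysis via the factorization structure, and the care lies in proving that reduction rigorously rather than in the final enumeration.
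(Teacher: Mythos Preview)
Your plan tracks the paper's two-halves structure, but it misidentifies the operative mechanisms on both sides.

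For the constructions in (c) and (d), ``tensoring up'' smaller UPBs does not produce the target cardinalities: tensor products of product bases have multiplicative, not additive, sizes, and there is no combination rule that turns a $(p+2)$-state UPB on $p$ qubits into a $(p'+3)$- or $(p'+4)$-state UPB on $p'$ qubits in any controlled way. The paper's $p=8$ example (Lemma~\ref{lem:8_min_construct}) is an ad~hoc orthogonality graph found by computer search, and the general $4k+4$ construction (Lemma~\ref{lem:4k_min_construct}) is built from scratch: $k+1$ explicit bipartite subgraphs whose union is $K_{2k+2,2k+2}$, together with a $1$-factorization of the complementary graph $K_{2k+2}^2$, assembled so that on all but three parties any qubit vector is orthogonal to at most one state in the set. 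That last feature is precisely what forces unextendibility, and it does not fall out of any tensoring operation.

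For the lower bounds, ``parity obstruction'' plus ``near-perfect matching'' is not the engine. Parity arguments alone only exclude size $4k+1$; they give no traction against $4k+2$ or $4k+3$. The actual lever in Lemmas~\ref{lem:upb_p2} and~\ref{lem:upb_p3} is \emph{edge counting}: unextendibility caps the size $M_j$ of any group of states equal on party $j$ (for instance $M_j\le 2$ when the UPB has $4k+2$ states, $M_j\le 3$ when it has $4k+3$), and this in turn bounds the number of edges party $j$'s orthogonality graph can contribute. Summing these caps over all $4k$ parties and comparing against the $\binom{4k+2}{2}$ or $\binom{4k+3}{2}$ edges required for mutual orthogonality yields a shortfall once $k$ is large enough. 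The case analysis --- and the computer search --- is over the possible values of $M_j$ and the multiplicities $C_{n,j}$ of equal-state groups, not over matching or factorization structures. Without this counting inequality your plan has no concrete step that rules out the intermediate sizes.
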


Case~(a) of Theorem~\ref{thm:main} is demonstrated by the ``GenShifts'' UPB constructed in \cite{DMSST03}. Case~(b) of Theorem~\ref{thm:main} was proved in \cite{Fen06}, and in general our techniques and presentation are similar to those of that paper. Our contribution is to prove cases~(c) and~(d) and hence complete the characterization. It is worth pointing out that cases~(c) and~(d) of Theorem~\ref{thm:main} are the first known cases (qubit or otherwise) where the minimum cardinality of a UPB exceeds the trivial lower bound $1 + \sum_{j}(d_j-1)$ by more than $1$ (see \cite{CJ13,Fen06} for several examples where the trivial lower bound is exceeded by exactly $1$).

Orthogonality graphs provide a very useful tool when dealing with unextendible product bases, particularly in the qubit case. Given a set of product states $\cl{S} = \{\ket{v_0},\ldots,\ket{v_{s-1}}\} \subseteq (\bb{C}^2)^{\otimes p}$ with $|\cl{S}| = s$, we say that the \emph{orthogonality graph of $\cl{S}$} is the graph on $s$ vertices $V := \{v_0,\ldots,v_{s-1}\}$ such that there is an edge $(v_i,v_j)$ of color $\ell$ if and only if $\ket{v_i}$ and $\ket{v_j}$ are orthogonal to each other on party $\ell$. Rather than actually using $p$ colors to color the edges of the orthogonality graph, for ease of visualization we instead draw $p$ different graphs on the same set of vertices -- one for each party (see Figure~\ref{fig:2dim_eg}).

\begin{figure}[htb]
	\centering
	\begin{tikzpicture}[x=1.4cm, y=1.4cm, label distance=0cm]     
		\vertex[fill] (v00) at (0,1) [label=90:$v_{0}$]{};
		\vertex[fill] (v01) at (-0.777,0.629) [label=141:$v_{1}$]{};
		\vertex[fill] (v02) at (-0.975,-0.225) [label=193:$v_{2}$]{};
		\vertex[fill] (v03) at (-0.434,-0.899) [label=244:$v_{3}$]{};
		\vertex[fill] (v04) at (0.434,-0.899) [label=296:$v_{4}$]{};
		\vertex[fill] (v05) at (0.975,-0.225) [label=347:$v_{5}$]{};
		\vertex[fill] (v06) at (0.777,0.629) [label=39:$v_{6}$]{};
				
		\vertex[fill] (v10) at (3.5,1) [label=90:$v_{0}$]{};
		\vertex[fill] (v11) at (2.723,0.629) [label=141:$v_{1}$]{};
		\vertex[fill] (v12) at (2.525,-0.225) [label=193:$v_{2}$]{};
		\vertex[fill] (v13) at (3.066,-0.899) [label=244:$v_{3}$]{};
		\vertex[fill] (v14) at (3.934,-0.899) [label=296:$v_{4}$]{};
		\vertex[fill] (v15) at (4.475,-0.225) [label=347:$v_{5}$]{};
		\vertex[fill] (v16) at (4.277,0.629) [label=39:$v_{6}$]{};

		\vertex[fill] (v20) at (7,1) [label=90:$v_{0}$]{};
		\vertex[fill] (v21) at (6.223,0.629) [label=141:$v_{1}$]{};
		\vertex[fill] (v22) at (6.025,-0.225) [label=193:$v_{2}$]{};
		\vertex[fill] (v23) at (6.566,-0.899) [label=244:$v_{3}$]{};
		\vertex[fill] (v24) at (7.434,-0.899) [label=296:$v_{4}$]{};
		\vertex[fill] (v25) at (7.975,-0.225) [label=347:$v_{5}$]{};
		\vertex[fill] (v26) at (7.777,0.629) [label=39:$v_{6}$]{};

		\path 
			(v00) edge (v03)
			(v00) edge (v04)
			(v00) edge (v05)
			(v00) edge (v06)
			(v01) edge (v03)
			(v01) edge (v04)
			(v01) edge (v05)
			(v01) edge (v06)
			(v02) edge (v03)
			(v02) edge (v04)
			(v02) edge (v05)
			(v02) edge (v06)

			(v11) edge (v10)
			(v11) edge (v12)
			(v13) edge (v15)
			(v13) edge (v16)
			(v14) edge (v15)
			(v14) edge (v16)

			(v20) edge (v22)
			(v21) edge (v23)
			(v23) edge (v24)
			(v25) edge (v26)
		;
	\end{tikzpicture}
	\caption{The orthogonality graph of a set of $7$ product states in $(\bb{C}^2)^{\otimes 3}$. This set of states is a product basis, since every edge is present in at least one of the three graphs, but it is extendible, since we can find a product state that is orthogonal to the states associated with $v_3, v_4, v_5, v_6$ on the first subsystem, $v_0, v_2$ on the second subsystem, and $v_1$ on the third subsystem.}\label{fig:2dim_eg}
\end{figure}
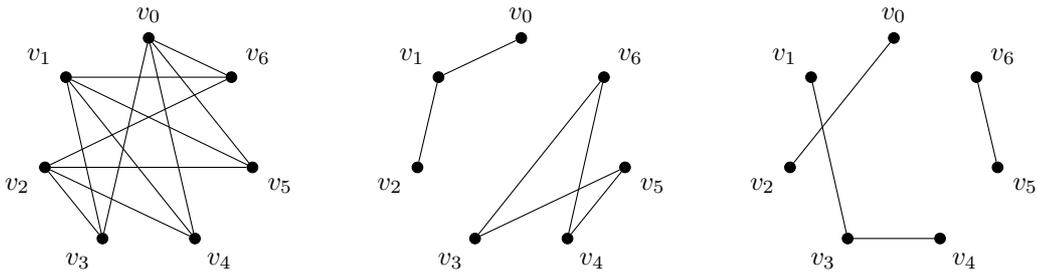

The requirement that $\cl{S}$ is an orthonormal set is equivalent to requiring that every edge is present on at least one party in its orthogonality graph. In order to help us visualize the unextendibility requirement, we make a few more observations. In particular, if $\ket{w_0},\ket{w_1},\ket{w_2} \in \bb{C}^2$ are such that $\braket{w_0}{w_1} = \braket{w_0}{w_2} = 0$, then it is necessarily the case that $\ket{w_1} = \ket{w_2}$ (up to irrelevant complex phase). It follows that the orthogonality graph associated with any qubit in a product basis is the disjoint union of complete bipartite graphs. For example, in Figure~\ref{fig:2dim_eg} the left graph is $K_{3,4}$, the center graph is the disjoint union of $K_{1,2}$ and $K_{2,2}$, and the right graph is the disjoint union of $K_{1,2}$ and two copies of $K_{1,1}$.

Furthermore, not only does every set of product states have an orthogonality graph that can be decomposed into the disjoint union complete bipartite graphs, but the converse is also true: every graph that is built from complete bipartite graphs in this way is the orthogonality graph of some set of product states. To see this, on each party assign to each complete bipartite graph a distinct basis of $\bb{C}^2$ in the obvious way. For example, one set of product states giving rise to the orthogonality graph depicted in Figure~\ref{fig:2dim_eg} is as follows:
\begin{align*}
	\ket{v_0} & := \ket{0} \otimes \ket{0} \otimes \ket{0}, & \ket{v_1} & := \ket{0} \otimes \ket{1} \otimes \ket{+}, & \ket{v_2} & := \ket{0} \otimes \ket{0} \otimes \ket{1}, \\
	\ket{v_3} & := \ket{1} \otimes \ket{+} \otimes \ket{-}, & \ket{v_4} & := \ket{1} \otimes \ket{+} \otimes \ket{+}, & \ket{v_5} & := \ket{1} \otimes \ket{-} \otimes \ket{b}, \\
	\ket{v_6} & := \ket{1} \otimes \ket{-} \otimes \ket{b^\perp},
\end{align*}
where $\ket{+} := \frac{1}{\sqrt{2}}(\ket{0} + \ket{1})$, $\ket{-} := \frac{1}{\sqrt{2}}(\ket{0} - \ket{1})$, and $\{\ket{b},\ket{b^\perp}\}$ is any orthonormal basis of $\bb{C}^2$ not equal to $\{\ket{0},\ket{1}\}$ or $\{\ket{+},\ket{-}\}$.

It is often useful to draw orthogonality graphs of sets of qubit product states in a form that makes their decomposition in terms of complete bipartite graphs more transparent -- we draw shaded regions indicating which vertices are equal to each other (up to complex phase) on the given party, and lines between shaded regions indicate that all states in one of the regions are orthogonal to all states in the other region on that party (see Figure~\ref{fig:2dim_eg_compact}).
\begin{figure}[htb]
	\centering
	\begin{tikzpicture}[x=1.4cm, y=1.4cm, label distance=0.07cm]     
    \draw[draw=black] (-0.5840,0.4680) -- (0.4380,-0.3485);

    \draw[draw=black] (3.0125,0.3875) -- (2.723,0.629);
    \draw[draw=black] (3.5000,-0.8990) -- (4.3760,0.2020);

    \draw[draw=black,rounded corners=0.8cm] (7,1) -- (5.45,1.25) -- (6.025,-0.225);
    \draw[draw=black] (6.566,-0.899) -- (7,-0.5);
    \draw[draw=black] (7.975,-0.225) -- (7.777,0.629);

    \filldraw[fill=lightgray,line width=0.44cm,line join=round,draw=black] (0,1) -- (-0.777,0.629) -- (-0.975,-0.225) -- cycle;
    \filldraw[fill=lightgray,line width=0.4cm,line join=round,draw=lightgray] (0,1) -- (-0.777,0.629) -- (-0.975,-0.225) -- cycle;
    \filldraw[fill=lightgray,line width=0.44cm,line join=round,draw=black] (-0.434,-0.899) -- (0.434,-0.899) -- (0.975,-0.225) -- (0.777,0.629) -- cycle;
    \filldraw[fill=lightgray,line width=0.4cm,line join=round,draw=lightgray] (-0.434,-0.899) -- (0.434,-0.899) -- (0.975,-0.225) -- (0.777,0.629) -- cycle;
    
    \draw[line width=0.02cm,draw=black,fill=lightgray] (2.723,0.629) circle (0.2cm);
    \draw[line width=0.02cm,draw=black,fill=lightgray] (7,1) circle (0.2cm);
	\draw[line width=0.02cm,draw=black,fill=lightgray] (6.025,-0.225) circle (0.2cm);
    \draw[line width=0.02cm,draw=black,fill=lightgray] (6.566,-0.899) circle (0.2cm);
	\draw[line width=0.02cm,draw=black,fill=lightgray] (7.975,-0.225) circle (0.2cm);
    \draw[line width=0.02cm,draw=black,fill=lightgray] (7.777,0.629) circle (0.2cm);
    \filldraw[fill=lightgray,line width=0.44cm,line join=round,draw=black] (3.5,1) -- (2.525,-0.225) -- (3.0125,0.3875) -- cycle;
    \filldraw[fill=lightgray,line width=0.4cm,line join=round,draw=lightgray] (3.5,1) -- (2.525,-0.225) -- (3.0125,0.3875) -- cycle;
    \filldraw[fill=lightgray,line width=0.44cm,line join=round,draw=black] (3.066,-0.899) -- (3.934,-0.899) -- (3.5000,-0.8990) -- cycle;
    \filldraw[fill=lightgray,line width=0.4cm,line join=round,draw=lightgray] (3.066,-0.899) -- (3.934,-0.899) -- (3.5000,-0.8990) -- cycle;
    \filldraw[fill=lightgray,line width=0.44cm,line join=round,draw=black] (4.475,-0.225) -- (4.277,0.629) -- (4.3760,0.2020) -- cycle;
    \filldraw[fill=lightgray,line width=0.4cm,line join=round,draw=lightgray] (4.475,-0.225) -- (4.277,0.629) -- (4.3760,0.2020) -- cycle;
	\filldraw[fill=lightgray,line width=0.44cm,line join=round,draw=black] (6.223,0.629) -- (7.434,-0.899) -- (6.8285,-0.1350)-- cycle;
    \filldraw[fill=lightgray,line width=0.4cm,line join=round,draw=lightgray] (6.223,0.629) -- (7.434,-0.899) -- (6.8285,-0.1350) -- cycle;
        
    \vertex[fill] (v00) at (0,1) [label=90:$v_{0}$]{};
		\vertex[fill] (v01) at (-0.777,0.629) [label=141:$v_{1}$]{};
		\vertex[fill] (v02) at (-0.975,-0.225) [label=193:$v_{2}$]{};
		\vertex[fill] (v03) at (-0.434,-0.899) [label=244:$v_{3}$]{};
		\vertex[fill] (v04) at (0.434,-0.899) [label=296:$v_{4}$]{};
		\vertex[fill] (v05) at (0.975,-0.225) [label=347:$v_{5}$]{};
		\vertex[fill] (v06) at (0.777,0.629) [label=39:$v_{6}$]{};
				
		\vertex[fill] (v10) at (3.5,1) [label=90:$v_{0}$]{};
		\vertex[fill] (v11) at (2.723,0.629) [label=141:$v_{1}$]{};
		\vertex[fill] (v12) at (2.525,-0.225) [label=193:$v_{2}$]{};
		\vertex[fill] (v13) at (3.066,-0.899) [label=244:$v_{3}$]{};
		\vertex[fill] (v14) at (3.934,-0.899) [label=296:$v_{4}$]{};
		\vertex[fill] (v15) at (4.475,-0.225) [label=347:$v_{5}$]{};
		\vertex[fill] (v16) at (4.277,0.629) [label=39:$v_{6}$]{};

		\vertex[fill] (v20) at (7,1) [label=90:$v_{0}$]{};
		\vertex[fill] (v21) at (6.223,0.629) [label=141:$v_{1}$]{};
		\vertex[fill] (v22) at (6.025,-0.225) [label=193:$v_{2}$]{};
		\vertex[fill] (v23) at (6.566,-0.899) [label=244:$v_{3}$]{};
		\vertex[fill] (v24) at (7.434,-0.899) [label=296:$v_{4}$]{};
		\vertex[fill] (v25) at (7.975,-0.225) [label=347:$v_{5}$]{};
		\vertex[fill] (v26) at (7.777,0.629) [label=39:$v_{6}$]{};
	\end{tikzpicture}
	\caption{A representation of the same orthogonality graph as that of Figure~\ref{fig:2dim_eg}. Vertices within the same shaded region represent states that are equal to each other on that party. Lines between shaded regions indicate that every state within one of the regions is orthogonal to every state within the other region.}\label{fig:2dim_eg_compact}
\end{figure}
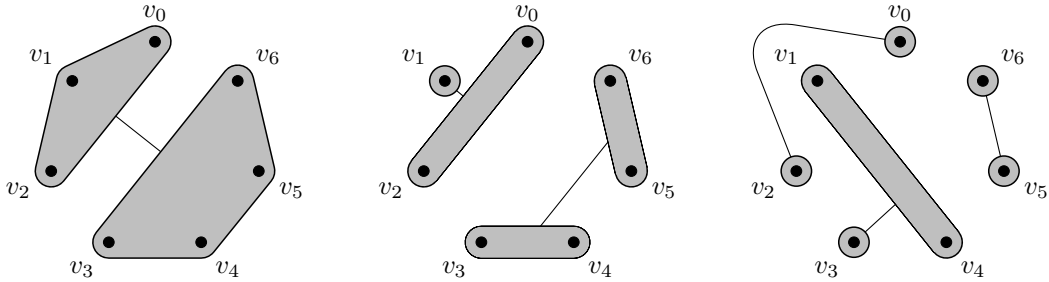

It now becomes straightforward to see whether or not a product basis is unextendible just by looking at its orthogonality graph. A set of product states is unextendible if and only if there is no way to choose one shaded region on each party such that every vertex $v_0,v_1,\ldots,v_{s-1}$ is contained within at least one of the shaded regions. For example, the set of product states described by Figure~\ref{fig:2dim_eg_compact} is extendible because we can choose the shaded region containing $v_3, v_4, v_5, v_6$ on the first subsystem, $v_0, v_2$ on the second subsystem, and $v_1,v_4$ on the third subsystem.

The following simple lemma shows that, in an orthogonality graph of a UPB, every shaded region must be connected to exactly one other shaded region via an edge.
\begin{lemma}\label{lem:all_pos}
	If $\cl{S} \subseteq (\bb{C}^2)^{\otimes p}$ is a UPB, then for all $\ket{v} \in \cl{S}$ and all integers $1 \leq j \leq p$ there is another product state $\ket{w} \in \cl{S}$ such that $\ket{v}$ and $\ket{w}$ are orthogonal on the $j$-th subsystem.
\end{lemma}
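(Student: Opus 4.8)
The plan is to argue by contradiction, exploiting the observation recorded just before the lemma: in $\bb{C}^2$ the vector orthogonal to a given unit vector is unique up to phase, so orthogonality on a single qubit is rigid. Concretely, suppose the conclusion fails, so that there is a state $\ket{v} = \ket{v_1} \otimes \cdots \otimes \ket{v_p} \in \cl{S}$ and a party $j$ for which no other member of $\cl{S}$ is orthogonal to $\ket{v}$ on party $j$; in the language of orthogonality graphs, $\ket{v}$ has no edge of color $j$. I would then manufacture a single product state orthogonal to \emph{every} member of $\cl{S}$, contradicting the defining property of a UPB.

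The state I would use is $\ket{v'} := \ket{v_1} \otimes \cdots \otimes \ket{v_{j-1}} \otimes \ket{v_j^\perp} \otimes \ket{v_{j+1}} \otimes \cdots \otimes \ket{v_p}$, where $\ket{v_j^\perp}$ is a unit vector orthogonal to $\ket{v_j}$ in $\bb{C}^2$. This is manifestly a product state, and it agrees with $\ket{v}$ on every party except the $j$-th. First, $\ket{v'}$ is orthogonal to $\ket{v}$ itself, since $\braket{v_j^\perp}{v_j} = 0$ makes the two orthogonal on party $j$.

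The crux is to show that $\ket{v'}$ is also orthogonal to every other $\ket{w} \in \cl{S}$, and this is where the hypothesis does the work. Because $\cl{S}$ is an orthonormal set, $\ket{v}$ and $\ket{w}$ must be orthogonal on some party $\ell$, and because $\ket{v}$ has no edge of color $j$ we are forced to have $\ell \neq j$. On party $\ell$ the state $\ket{v'}$ coincides with $\ket{v}$, so $\braket{v'_\ell}{w_\ell} = \braket{v_\ell}{w_\ell} = 0$ and hence $\ket{v'} \perp \ket{w}$. Thus $\ket{v'}$ is a product state orthogonal to every member of $\cl{S}$ (and it cannot itself lie in $\cl{S}$, since no unit vector is orthogonal to itself), so its existence contradicts unextendibility and completes the argument.

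I expect no serious obstacle here: the only point requiring care is the middle step, namely recognizing that the absence of a color-$j$ edge at $\ket{v}$ pushes the orthogonality between $\ket{v}$ and each $\ket{w}$ onto a party other than $j$ -- precisely the coordinates that $\ket{v'}$ leaves untouched. This is what allows a single local modification of $\ket{v}$ to preserve all the orthogonality relations it already enjoyed while simultaneously creating a new one with $\ket{v}$ itself.
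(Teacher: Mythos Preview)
Your proposal is correct and follows essentially the same approach as the paper: assume by contradiction that $\ket{v}$ has no color-$j$ edge, replace the $j$-th tensor factor of $\ket{v}$ by its orthogonal complement, and observe that the resulting product state is orthogonal to every member of $\cl{S}$, contradicting unextendibility. Your write-up is in fact slightly more explicit than the paper's (you separately verify orthogonality to $\ket{v}$ itself and note that the new state is not in $\cl{S}$), but the argument is the same.
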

\begin{proof}
	Suppose that there exists $1 \leq j \leq p$ and $\ket{v} := \ket{v_{(1)}} \otimes \cdots \otimes \ket{v_{(p)}} \in \cl{S}$ such that $\ket{v}$ is not orthogonal to any other member of $\cl{S}$ on the $j$-th subsystem. Because $\cl{S}$ is a product basis, $\ket{v}$ must be orthogonal to every member of $\cl{S}$ on the remaining $p-1$ subsystems. It follows that if $\ket{v_{(j)}^\perp}$ is orthogonal to $\ket{v_{(j)}}$ then the product state $\ket{v_{(1)}} \otimes \cdots \ket{v_{(j-1)}} \otimes \ket{v_{(j)}^\perp} \otimes \ket{v_{(j+1)}} \otimes \cdots \otimes \ket{v_{(p)}}$ is orthogonal to every element of $\cl{S}$, which shows that $\cl{S}$ is extendible.
\end{proof}

An obvious corollary of Lemma~\ref{lem:all_pos} is that, in the orthogonality graph of a UPB, every party must have an even number of distinct shaded regions -- a fact that will be very useful in Section~\ref{sec:minimality_proof}.

\section{Construction of Small UPBs}\label{sec:construction}

Recall that our goal is to show that the smallest UPB in $(\bb{C}^2)^{\otimes 8}$ consists of $11$ states and the smallest UPB in $(\bb{C}^2)^{\otimes 4k}$ consists of $4k + 4$ states when $k \geq 3$. Our first step toward this goal is to construct a UPB of the desired size in these cases.
\begin{lemma}\label{lem:8_min_construct}
	There exists a UPB in $(\bb{C}^2)^{\otimes 8}$ consisting of $11$ states. 
\end{lemma}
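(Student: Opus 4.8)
The plan is to exhibit an explicit orthogonality graph on $11$ vertices spread across the $8$ parties and then verify directly that it meets the two defining requirements of a UPB. Recall from Section~\ref{sec:prelims} that a set of product states is a product basis exactly when every pair of vertices is joined by an edge on at least one party, and that it is unextendible exactly when no choice of a single shaded region on each party covers all $11$ vertices. The second condition has a convenient operational reading: an extending product state $\ket{w} = \ket{w_1} \otimes \cdots \otimes \ket{w_8}$ is determined, on each party $j$, by a single one-dimensional subspace, and choosing $\ket{w_j}$ orthogonal to the common component of one region ``kills'' exactly that region. Thus $\cl{S}$ is extendible if and only if one can select one region per party whose union is all of $V$, and my target is a region structure for which every such selection misses a vertex.

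To build the graph I would specify, for each of the $8$ parties, a partition of the $11$ vertices into shaded regions together with the orthogonal pairing between them. By Lemma~\ref{lem:all_pos} and its corollary every region is matched to exactly one partner region, so each party carries an even number of regions. I would keep the regions small and distribute the pairings across parties so that (i) the combined ``killing power'' of the $8$ parties — one region each — can never exhaust all $11$ vertices, while (ii) every pair of vertices still lands in orthogonally paired regions on some party. A natural way to locate such a configuration is to bootstrap from a known small qubit UPB: for instance, one can start from a Shifts-style block on a few qubits (each region a singleton, so each party kills only one vertex, as in the $3$-qubit Shifts UPB) and then adjoin the remaining parties and extra states, adding new edges only where they are needed to repair the product-basis condition, without creating a large region that a single party could exploit to cover too much.

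With the graph in hand the verification splits into two finite checks. The product-basis property is a routine pairwise inspection: for each of the $\binom{11}{2}$ pairs one exhibits a party on which the two vertices occupy paired regions. The unextendibility property is the substantive one, namely that for every assignment of one region per party the union of the chosen regions omits at least one vertex. Here the parity corollary of Lemma~\ref{lem:all_pos} does real work, since $11$ is odd while the regions on each party must pair up; this constrains how the vertices distribute and lets one bound, party by party, how many vertices a single selection can reach, after which a short case analysis over the possible selections closes the gap.

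The main obstacle is precisely this balancing act: the product-basis requirement pushes toward many edges, hence large and numerous regions, whereas unextendibility with only $11$ states pushes toward regions small enough that $8$ selections cannot cover $V$. Finding an explicit configuration that threads this needle is the creative step, and confirming that no region selection covers $V$ — equivalently, ruling out every candidate extending state simultaneously — is the most delicate part of the verification, although for a fixed, explicitly drawn graph on only $11$ vertices it remains a hand-checkable (if tedious) case analysis.
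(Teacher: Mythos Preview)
Your high-level plan is exactly the paper's: exhibit an explicit orthogonality graph on $11$ vertices across $8$ parties, then verify that every one of the $\binom{11}{2}=55$ edges appears on some party and that no choice of one region per party covers all $11$ vertices. So the strategy is correct and matches the paper's approach.

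The genuine gap is that you never actually produce the graph. Your proposal describes \emph{how} you would look for one (start from a Shifts-type block, adjoin parties and states, repair missing edges without creating large regions) and \emph{how} you would verify it once found, but the construction itself --- which is the entire content of this lemma --- is absent. The paper's proof consists almost entirely of the explicit figure; the verification is relegated to a single sentence calling it ``straightforward (albeit tedious)''. Without a concrete partition of the $11$ vertices on each of the $8$ parties, there is nothing to check, and your parity observation from Lemma~\ref{lem:all_pos} does not by itself force existence.

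There is also reason to doubt that your suggested search heuristic would succeed easily. The paper remarks that this particular UPB was found by a combination of computer search and hand-tweaking and ``does not seem to generalize to other values of $p$ in any natural way''. In particular it is not a clean extension of Shifts or GenShifts; several parties carry regions of size $2$ or $3$ arranged in an ad hoc pattern. So the bootstrapping-from-Shifts idea, while reasonable as a starting point, is unlikely to hand you the configuration without substantial additional work. To complete the proof you must either reproduce the paper's explicit graph (or an equivalent one) and carry out the two finite checks, or give some non-constructive argument that such a graph exists --- and you have supplied neither.
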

\begin{proof}
	The result follows simply from demonstrating an orthogonality graph on $11$ vertices that satisfies the product basis and unextendibility requirements described in Section~\ref{sec:prelims}. Such an orthogonality graph is provided in Figure~\ref{fig:upb_8_11}.
	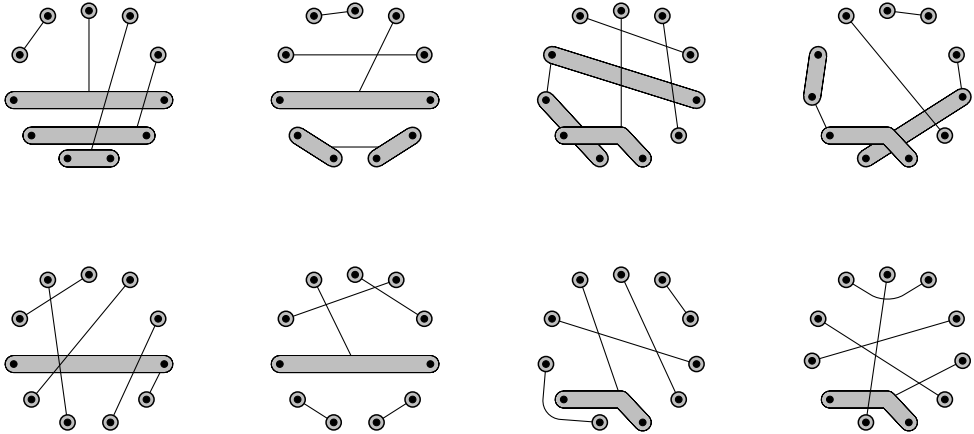
\begin{figure}[htb]
	\centering
	\begin{tikzpicture}[x=1cm, y=1cm, label distance=0cm]  
    \draw[draw=black] (2.9810,-14.8070) -- (4.0190,-14.8070);
    \draw[draw=black] (6.090,-13.585) -- (6.010,-14.185);
    \draw[draw=black] (9.510,-14.142) -- (9.744,-14.655);
    \draw[draw=black] (0,-13) -- (0,-14.185);
	\draw[draw=black] (-0.541,-13.069) -- (-0.910,-13.585);

    \filldraw[fill=lightgray,line width=0.24cm,line join=round,draw=black] (-0.990,-14.185) -- (0.990,-14.185) -- (0,-14.185) -- cycle;
    \filldraw[fill=lightgray,line width=0.20cm,line join=round,draw=lightgray] (-0.990,-14.185) -- (0.990,-14.185) -- (0,-14.185) -- cycle;

	\draw[draw=black] (0.910,-13.585) -- (0.6,-14.655);
    \filldraw[fill=lightgray,line width=0.24cm,line join=round,draw=black] (-0.756,-14.655) -- (0.756,-14.655) -- (0,-14.655) -- cycle;
    \filldraw[fill=lightgray,line width=0.20cm,line join=round,draw=lightgray] (-0.756,-14.655) -- (0.756,-14.655) -- (0,-14.655) -- cycle;

	\draw[draw=black] (0.541,-13.069) -- (0,-14.959);
	\filldraw[fill=lightgray,line width=0.24cm,line join=round,draw=black] (-0.282,-14.959) -- (0.282,-14.959) -- (0,-14.959) -- cycle;
    \filldraw[fill=lightgray,line width=0.20cm,line join=round,draw=lightgray] (-0.282,-14.959) -- (0.282,-14.959) -- (0,-14.959) -- cycle;

    \draw[line width=0.02cm,draw=black,fill=lightgray] (-0.541,-13.069) circle (0.1cm);
    \draw[line width=0.02cm,draw=black,fill=lightgray] (0,-13) circle (0.1cm);
    \draw[line width=0.02cm,draw=black,fill=lightgray] (-0.910,-13.585) circle (0.1cm);
    \draw[line width=0.02cm,draw=black,fill=lightgray] (0.910,-13.585) circle (0.1cm);
    \draw[line width=0.02cm,draw=black,fill=lightgray] (0.541,-13.069) circle (0.1cm);

		\tvertex[fill] (v000) at (0,-13) []{};
		\tvertex[fill] (v001) at (-0.541,-13.069) []{};
		\tvertex[fill] (v002) at (-0.910,-13.585) []{};
		\tvertex[fill] (v003) at (-0.990,-14.185) []{};
		\tvertex[fill] (v004) at (-0.756,-14.655) []{};
		\tvertex[fill] (v005) at (-0.282,-14.959) []{};
		\tvertex[fill] (v006) at (0.282,-14.959) []{};
		\tvertex[fill] (v007) at (0.756,-14.655) []{};
		\tvertex[fill] (v008) at (0.990,-14.185) []{};
		\tvertex[fill] (v009) at (0.910,-13.585) []{};
		\tvertex[fill] (v010) at (0.541,-13.069) []{};

	\draw[draw=black] (3.5,-13) -- (2.959,-13.069);
	\draw[draw=black] (2.590,-13.585) -- (4.410,-13.585);
	\draw[draw=black] (4.041,-13.069) -- (3.5,-14.185);

    \filldraw[fill=lightgray,line width=0.24cm,line join=round,draw=black] (2.51,-14.185) -- (4.490,-14.185) -- (3.5000,-14.1850) -- cycle;
    \filldraw[fill=lightgray,line width=0.20cm,line join=round,draw=lightgray] (2.51,-14.185) -- (4.490,-14.185) -- (3.5000,-14.1850) -- cycle;
    \filldraw[fill=lightgray,line width=0.24cm,line join=round,draw=black] (2.744,-14.655) -- (3.218,-14.959) -- (2.9810,-14.8070) -- cycle;
    \filldraw[fill=lightgray,line width=0.20cm,line join=round,draw=lightgray] (2.744,-14.655) -- (3.218,-14.959) -- (2.9810,-14.8070) -- cycle;
	\filldraw[fill=lightgray,line width=0.24cm,line join=round,draw=black] (3.782,-14.959) -- (4.256,-14.655) -- (4.0190,-14.8070) -- cycle;
    \filldraw[fill=lightgray,line width=0.20cm,line join=round,draw=lightgray] (3.782,-14.959) -- (4.256,-14.655) -- (4.0190,-14.8070) -- cycle;

    \draw[line width=0.02cm,draw=black,fill=lightgray] (2.959,-13.069) circle (0.1cm);
    \draw[line width=0.02cm,draw=black,fill=lightgray] (3.5,-13) circle (0.1cm);
    \draw[line width=0.02cm,draw=black,fill=lightgray] (2.59,-13.585) circle (0.1cm);
    \draw[line width=0.02cm,draw=black,fill=lightgray] (4.410,-13.585) circle (0.1cm);
    \draw[line width=0.02cm,draw=black,fill=lightgray] (4.041,-13.069) circle (0.1cm);

		\tvertex[fill] (v100) at (3.5,-13) []{};
		\tvertex[fill] (v101) at (2.959,-13.069) []{};
		\tvertex[fill] (v102) at (2.590,-13.585) []{};
		\tvertex[fill] (v103) at (2.510,-14.185) []{};
		\tvertex[fill] (v104) at (2.744,-14.655) []{};
		\tvertex[fill] (v105) at (3.218,-14.959) []{};
		\tvertex[fill] (v106) at (3.782,-14.959) []{};
		\tvertex[fill] (v107) at (4.256,-14.655) []{};
		\tvertex[fill] (v108) at (4.490,-14.185) []{};
		\tvertex[fill] (v109) at (4.410,-13.585) []{};
		\tvertex[fill] (v110) at (4.041,-13.069) []{};

    \filldraw[fill=lightgray,line width=0.24cm,line join=round,draw=black] (6.010,-14.185) -- (6.718,-14.959) -- (6.3640,-14.5720) -- cycle;
    \filldraw[fill=lightgray,line width=0.20cm,line join=round,draw=lightgray] (6.010,-14.185) -- (6.718,-14.959) -- (6.3640,-14.5720) -- cycle;
	\filldraw[fill=lightgray,line width=0.24cm,line join=round,draw=black] (6.090,-13.585) -- (7.990,-14.185) -- (7.0400,-13.8850) -- cycle;
    \filldraw[fill=lightgray,line width=0.20cm,line join=round,draw=lightgray] (6.090,-13.585) -- (7.990,-14.185) -- (7.0400,-13.8850) -- cycle;
	
	\draw[draw=black] (7,-13) -- (7,-14.655);
	\draw[draw=black] (7.541,-13.069) -- (7.756,-14.655);
	\draw[draw=black] (6.459,-13.069) -- (7.910,-13.585);
	\filldraw[fill=lightgray,line width=0.24cm,line join=round,draw=black] (7,-14.655) -- (6.244,-14.655) -- (7,-14.655) -- (7.282,-14.959) -- cycle;
    \filldraw[fill=lightgray,line width=0.20cm,line join=round,draw=lightgray] (7,-14.655) -- (6.244,-14.655) -- (7,-14.655) -- (7.282,-14.959) -- cycle;

    \draw[line width=0.02cm,draw=black,fill=lightgray] (6.459,-13.069) circle (0.1cm);
    \draw[line width=0.02cm,draw=black,fill=lightgray] (7,-13) circle (0.1cm);
    \draw[line width=0.02cm,draw=black,fill=lightgray] (7.756,-14.655) circle (0.1cm);
    \draw[line width=0.02cm,draw=black,fill=lightgray] (7.910,-13.585) circle (0.1cm);
    \draw[line width=0.02cm,draw=black,fill=lightgray] (7.541,-13.069) circle (0.1cm);

		\tvertex[fill] (v200) at (7,-13) []{};
		\tvertex[fill] (v201) at (6.459,-13.069) []{};
		\tvertex[fill] (v202) at (6.090,-13.585) []{};
		\tvertex[fill] (v203) at (6.010,-14.185) []{};
		\tvertex[fill] (v204) at (6.244,-14.655) []{};
		\tvertex[fill] (v205) at (6.718,-14.959) []{};
		\tvertex[fill] (v206) at (7.282,-14.959) []{};
		\tvertex[fill] (v207) at (7.756,-14.655) []{};
		\tvertex[fill] (v208) at (7.990,-14.185) []{};
		\tvertex[fill] (v209) at (7.910,-13.585) []{};
		\tvertex[fill] (v210) at (7.541,-13.069) []{};

	\draw[draw=black] (11.490,-14.142) -- (11.410,-13.585);
	\draw[draw=black] (10.5,-13) -- (11.041,-13.069);

    \filldraw[fill=lightgray,line width=0.24cm,line join=round,draw=black] (9.590,-13.585) -- (9.510,-14.142) -- (9.5500,-13.8635) -- cycle;
    \filldraw[fill=lightgray,line width=0.20cm,line join=round,draw=lightgray] (9.590,-13.585) -- (9.510,-14.142) -- (9.5500,-13.8635) -- cycle;
    \filldraw[fill=lightgray,line width=0.24cm,line join=round,draw=black] (10.218,-14.959) -- (11.490,-14.142) -- (10.8540,-14.5505) -- cycle;
    \filldraw[fill=lightgray,line width=0.20cm,line join=round,draw=lightgray] (10.218,-14.959) -- (11.490,-14.142) -- (10.8540,-14.5505) -- cycle;
	\filldraw[fill=lightgray,line width=0.24cm,line join=round,draw=black] (10.5,-14.655) -- (9.744,-14.655) -- (10.5,-14.655) -- (10.782,-14.959) -- cycle;
    \filldraw[fill=lightgray,line width=0.20cm,line join=round,draw=lightgray] (10.5,-14.655) -- (9.744,-14.655) -- (10.5,-14.655) -- (10.782,-14.959) -- cycle;
    
    \draw[draw=black] (9.959,-13.069) -- (11.256,-14.655);

    \draw[line width=0.02cm,draw=black,fill=lightgray] (9.959,-13.069) circle (0.1cm);
    \draw[line width=0.02cm,draw=black,fill=lightgray] (10.5,-13) circle (0.1cm);
    \draw[line width=0.02cm,draw=black,fill=lightgray] (11.256,-14.655) circle (0.1cm);
    \draw[line width=0.02cm,draw=black,fill=lightgray] (11.410,-13.585) circle (0.1cm);
    \draw[line width=0.02cm,draw=black,fill=lightgray] (11.041,-13.069) circle (0.1cm);

		\tvertex[fill] (v300) at (10.5,-13) []{};
		\tvertex[fill] (v301) at (9.959,-13.069) []{};
		\tvertex[fill] (v302) at (9.590,-13.585) []{};
		\tvertex[fill] (v303) at (9.510,-14.142) []{};
		\tvertex[fill] (v304) at (9.744,-14.655) []{};
		\tvertex[fill] (v305) at (10.218,-14.959) []{};
		\tvertex[fill] (v306) at (10.782,-14.959) []{};
		\tvertex[fill] (v307) at (11.256,-14.655) []{};
		\tvertex[fill] (v308) at (11.490,-14.142) []{};
		\tvertex[fill] (v309) at (11.410,-13.585) []{};
		\tvertex[fill] (v310) at (11.041,-13.069) []{};


	\draw[draw=black] (0.756,-18.155) -- (0.990,-17.685);
	\draw[draw=black] (0,-16.5) -- (-0.910,-17.085);

    \filldraw[fill=lightgray,line width=0.24cm,line join=round,draw=black] (-0.990,-17.685) -- (0.990,-17.685) -- (0,-17.6850) -- cycle;
    \filldraw[fill=lightgray,line width=0.20cm,line join=round,draw=lightgray] (-0.990,-17.685) -- (0.990,-17.685) -- (0,-17.6850) -- cycle;
    
	\draw[draw=black] (-0.756,-18.155) -- (0.541,-16.569);
	\draw[draw=black] (-0.541,-16.569) -- (-0.282,-18.459);
	\draw[draw=black] (0.282,-18.459) -- (0.910,-17.085);

    \draw[line width=0.02cm,draw=black,fill=lightgray] (-0.756,-18.155) circle (0.1cm);
    \draw[line width=0.02cm,draw=black,fill=lightgray] (-0.282,-18.459) circle (0.1cm);
    \draw[line width=0.02cm,draw=black,fill=lightgray] (-0.541,-16.569) circle (0.1cm);
    \draw[line width=0.02cm,draw=black,fill=lightgray] (0.282,-18.459) circle (0.1cm);
    \draw[line width=0.02cm,draw=black,fill=lightgray] (0.756,-18.155) circle (0.1cm);
    \draw[line width=0.02cm,draw=black,fill=lightgray] (0,-16.5) circle (0.1cm);
    \draw[line width=0.02cm,draw=black,fill=lightgray] (-0.910,-17.085) circle (0.1cm);
    \draw[line width=0.02cm,draw=black,fill=lightgray] (0.910,-17.085) circle (0.1cm);
    \draw[line width=0.02cm,draw=black,fill=lightgray] (0.541,-16.569) circle (0.1cm);

		\tvertex[fill] (v000) at (0,-16.5) []{};
		\tvertex[fill] (v001) at (-0.541,-16.569) []{};
		\tvertex[fill] (v002) at (-0.910,-17.085) []{};
		\tvertex[fill] (v003) at (-0.990,-17.685) []{};
		\tvertex[fill] (v004) at (-0.756,-18.155) []{};
		\tvertex[fill] (v005) at (-0.282,-18.459) []{};
		\tvertex[fill] (v006) at (0.282,-18.459) []{};
		\tvertex[fill] (v007) at (0.756,-18.155) []{};
		\tvertex[fill] (v008) at (0.990,-17.685) []{};
		\tvertex[fill] (v009) at (0.910,-17.085) []{};
		\tvertex[fill] (v010) at (0.541,-16.569) []{};

	\draw[draw=black] (2.744,-18.155) -- (3.218,-18.459);
	\draw[draw=black] (3.782,-18.459) -- (4.256,-18.155);
	\draw[draw=black] (2.959,-16.569) -- (3.5,-17.685);
	\draw[draw=black] (2.590,-17.085) -- (4.041,-16.569);
	\draw[draw=black] (3.5,-16.5) -- (4.410,-17.085);

    \filldraw[fill=lightgray,line width=0.24cm,line join=round,draw=black] (2.51,-17.685) -- (4.490,-17.685) -- (3.5,-17.685) -- cycle;
    \filldraw[fill=lightgray,line width=0.20cm,line join=round,draw=lightgray] (2.51,-17.685) -- (4.490,-17.685) -- (3.5,-17.685) -- cycle;

    \draw[line width=0.02cm,draw=black,fill=lightgray] (2.7441,-18.155) circle (0.1cm);
    \draw[line width=0.02cm,draw=black,fill=lightgray] (3.218,-18.459) circle (0.1cm);
    \draw[line width=0.02cm,draw=black,fill=lightgray] (4.256,-18.155) circle (0.1cm);
    \draw[line width=0.02cm,draw=black,fill=lightgray] (2.959,-16.569) circle (0.1cm);
    \draw[line width=0.02cm,draw=black,fill=lightgray] (3.5,-16.5) circle (0.1cm);
    \draw[line width=0.02cm,draw=black,fill=lightgray] (2.59,-17.085) circle (0.1cm);
    \draw[line width=0.02cm,draw=black,fill=lightgray] (4.410,-17.085) circle (0.1cm);
    \draw[line width=0.02cm,draw=black,fill=lightgray] (4.041,-16.569) circle (0.1cm);
    \draw[line width=0.02cm,draw=black,fill=lightgray] (3.782,-18.459) circle (0.1cm);

		\tvertex[fill] (v100) at (3.5,-16.5) []{};
		\tvertex[fill] (v101) at (2.959,-16.569) []{};
		\tvertex[fill] (v102) at (2.590,-17.085) []{};
		\tvertex[fill] (v103) at (2.510,-17.685) []{};
		\tvertex[fill] (v104) at (2.744,-18.155) []{};
		\tvertex[fill] (v105) at (3.218,-18.459) []{};
		\tvertex[fill] (v106) at (3.782,-18.459) []{};
		\tvertex[fill] (v107) at (4.256,-18.155) []{};
		\tvertex[fill] (v108) at (4.490,-17.685) []{};
		\tvertex[fill] (v109) at (4.410,-17.085) []{};
		\tvertex[fill] (v110) at (4.041,-16.569) []{};

	\draw[draw=black,rounded corners=0.25cm] (6.010,-17.685) -- (5.95,-18.4) -- (6.718,-18.459);
	\draw[draw=black] (6.459,-16.569) -- (7,-18.155);
	\draw[draw=black] (6.090,-17.085) -- (7.990,-17.685);
	\draw[draw=black] (7,-16.5) -- (7.756,-18.155);
	\draw[draw=black] (7.541,-16.569) -- (7.910,-17.085);

    \filldraw[fill=lightgray,line width=0.24cm,line join=round,draw=black] (7,-18.155) -- (6.244,-18.155) -- (7,-18.155) -- (7.282,-18.459) -- cycle;
    \filldraw[fill=lightgray,line width=0.20cm,line join=round,draw=lightgray] (7,-18.155) -- (6.244,-18.155) -- (7,-18.155) -- (7.282,-18.459) -- cycle;

    \draw[line width=0.02cm,draw=black,fill=lightgray] (6.090,-17.085) circle (0.1cm);
    \draw[line width=0.02cm,draw=black,fill=lightgray] (6.010,-17.685) circle (0.1cm);
    \draw[line width=0.02cm,draw=black,fill=lightgray] (7.910,-17.085) circle (0.1cm);
    \draw[line width=0.02cm,draw=black,fill=lightgray] (7.990,-17.685) circle (0.1cm);
    \draw[line width=0.02cm,draw=black,fill=lightgray] (6.459,-16.569) circle (0.1cm);
    \draw[line width=0.02cm,draw=black,fill=lightgray] (7,-16.5) circle (0.1cm);
    \draw[line width=0.02cm,draw=black,fill=lightgray] (7.756,-18.155) circle (0.1cm);
    \draw[line width=0.02cm,draw=black,fill=lightgray] (6.718,-18.459) circle (0.1cm);
    \draw[line width=0.02cm,draw=black,fill=lightgray] (7.541,-16.569) circle (0.1cm);

		\tvertex[fill] (v200) at (7,-16.5) []{};
		\tvertex[fill] (v201) at (6.459,-16.569) []{};
		\tvertex[fill] (v202) at (6.090,-17.085) []{};
		\tvertex[fill] (v203) at (6.010,-17.685) []{};
		\tvertex[fill] (v204) at (6.244,-18.155) []{};
		\tvertex[fill] (v205) at (6.718,-18.459) []{};
		\tvertex[fill] (v206) at (7.282,-18.459) []{};
		\tvertex[fill] (v207) at (7.756,-18.155) []{};
		\tvertex[fill] (v208) at (7.990,-17.685) []{};
		\tvertex[fill] (v209) at (7.910,-17.085) []{};
		\tvertex[fill] (v210) at (7.541,-16.569) []{};

	\draw[draw=black,rounded corners=0.25cm] (9.959,-16.569) -- (10.5,-16.9) -- (11.041,-16.569);
	\draw[draw=black] (9.590,-17.085) -- (11.256,-18.155);
	\draw[draw=black] (9.510,-17.642) -- (11.410,-17.085);
	\draw[draw=black] (10.5,-18.155) -- (11.490,-17.642);

	\filldraw[fill=lightgray,line width=0.24cm,line join=round,draw=black] (10.5,-18.155) -- (9.744,-18.155) -- (10.5,-18.155) -- (10.782,-18.459) -- cycle;
    \filldraw[fill=lightgray,line width=0.20cm,line join=round,draw=lightgray] (10.5,-18.155) -- (9.744,-18.155) -- (10.5,-18.155) -- (10.782,-18.459) -- cycle;

	\draw[draw=black] (10.5,-16.5) -- (10.218,-18.459);

    \draw[line width=0.02cm,draw=black,fill=lightgray] (10.218,-18.459) circle (0.1cm);
    \draw[line width=0.02cm,draw=black,fill=lightgray] (9.590,-17.085) circle (0.1cm);
    \draw[line width=0.02cm,draw=black,fill=lightgray] (9.510,-17.642) circle (0.1cm);
    \draw[line width=0.02cm,draw=black,fill=lightgray] (11.410,-17.085) circle (0.1cm);
    \draw[line width=0.02cm,draw=black,fill=lightgray] (11.490,-17.642) circle (0.1cm);
    \draw[line width=0.02cm,draw=black,fill=lightgray] (9.959,-16.569) circle (0.1cm);
    \draw[line width=0.02cm,draw=black,fill=lightgray] (10.5,-16.5) circle (0.1cm);
    \draw[line width=0.02cm,draw=black,fill=lightgray] (11.256,-18.155) circle (0.1cm);
    \draw[line width=0.02cm,draw=black,fill=lightgray] (11.041,-16.569) circle (0.1cm);

		\tvertex[fill] (v300) at (10.5,-16.5) []{};
		\tvertex[fill] (v301) at (9.959,-16.569) []{};
		\tvertex[fill] (v302) at (9.590,-17.085) []{};
		\tvertex[fill] (v303) at (9.510,-17.642) []{};
		\tvertex[fill] (v304) at (9.744,-18.155) []{};
		\tvertex[fill] (v305) at (10.218,-18.459) []{};
		\tvertex[fill] (v306) at (10.782,-18.459) []{};
		\tvertex[fill] (v307) at (11.256,-18.155) []{};
		\tvertex[fill] (v308) at (11.490,-17.642) []{};
		\tvertex[fill] (v309) at (11.410,-17.085) []{};
		\tvertex[fill] (v310) at (11.041,-16.569) []{};
	\end{tikzpicture}
	\caption{Orthogonality graphs demonstrating that there exists an $11$-state UPB in $(\bb{C}^2)^{\otimes 8}$.}\label{fig:upb_8_11}
\end{figure}

Indeed, it is straightforward (albeit tedious) to check that the $8$ graphs depicted in Figure~\ref{fig:upb_8_11} contain all $55$ possible edges between $11$ vertices, so the corresponding product states are mutually orthogonal. Unextendibility follows from the (also straightforward but tedious) fact that there is no way to choose a shaded region containing $2$ vertices on $3$ different parties without at least $2$ of them containing the same vertex.
\end{proof}

We note that the UPB of Lemma~\ref{lem:8_min_construct} was found by a combination of computer search and tweaking by hand, and it does not seem to generalize to other values of $p$ in any natural way. On the other hand, the UPBs that we now construct of cardinality $4k+4$ are much ``tidier''.

\begin{lemma}\label{lem:4k_min_construct}
	If $k \geq 2$ then there exists a UPB in $(\bb{C}^2)^{\otimes 4k}$ consisting of $4k + 4$ states. 
\end{lemma}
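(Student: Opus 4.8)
The plan is to reduce the construction entirely to a statement about edge-decompositions of the complete graph $K_{4k+4}$, and then to realise that decomposition explicitly. Recall from Section~\ref{sec:prelims} that an orthogonality graph of a qubit product basis is specified, party by party, by a partition of the vertex set into shaded regions that are paired off by complete-bipartite orthogonality, and that the product-basis condition is precisely that the union of all parties' edge sets is the complete graph on the $4k+4$ vertices. I would therefore seek to write
\[ E(K_{4k+4}) = Q_1 \sqcup Q_2 \sqcup Q_3 \sqcup F_1 \sqcup \cdots \sqcup F_{4k-3}, \]
where each $F_i$ is a perfect matching and each $Q_j$ is a \emph{$C_4$-factor}, i.e.\ a spanning $2$-regular subgraph every component of which is a $4$-cycle. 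Reading each matching $F_i$ as a party whose graph is a disjoint union of copies of $K_{1,1}$, and reading each $C_4$-factor $Q_j$ as a party whose graph is a disjoint union of copies of $K_{2,2}$ (the two sides of each $4$-cycle being the two shaded regions), produces a valid orthogonality graph on $4k$ parties.

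Granting such a decomposition, both required properties would be immediate. Orthonormality holds because the parties jointly realise every edge of $K_{4k+4}$. For unextendibility I would use the criterion of Section~\ref{sec:prelims}: a region chosen on a matching party contains a single vertex, while a region chosen on a $C_4$-factor party contains exactly two vertices, so any choice of one region per party covers at most $(4k-3)\cdot 1 + 3\cdot 2 = 4k+3 < 4k+4$ vertices and therefore misses at least one. This also explains why exactly three dense parties should appear: the edge count forces at least three $C_4$-factors to make up the deficit left by matchings, while the coverage bound forbids a fourth, so the shape of the construction is essentially forced and tight; it is consistent with Lemma~\ref{lem:all_pos}, since every party here has an even number of regions. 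A quick sanity check is the degree count at each vertex, $3\cdot 2 + (4k-3)\cdot 1 = 4k+3$, which matches the degree of $K_{4k+4}$, together with the edge count $3(4k+4) + (4k-3)(2k+2) = \binom{4k+4}{2}$.

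To build the decomposition I would impose product structure, identifying the vertices with $\bb{Z}_{k+1} \times \bb{Z}_4$ and calling each set $\{s\} \times \bb{Z}_4$ a fibre. Inside every fibre the six edges of $K_4$ split as one $4$-cycle plus one pair of diagonals, so the fibres supply the first $C_4$-factor $Q_1$ together with one perfect matching. The remaining edges form the complete multipartite graph with $k+1$ parts of size $4$, which is $4k$-regular of even order; here I would carve out the two further $C_4$-factors $Q_2,Q_3$ by a cyclic ``rotation'' of $4$-cycles across pairs of fibres, and then decompose whatever survives into perfect matchings.

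The hard part is this last step: I must exhibit $Q_2,Q_3$ so that the leftover graph is genuinely $1$-factorable, which is not automatic for an arbitrary regular graph. I would control this by keeping the leftover highly structured -- for instance by arranging that, after the three $C_4$-factors are removed, the surviving graph is a regular complete multipartite graph (which is known to be $1$-factorable) together with the fibre diagonals, or alternatively by giving the entire decomposition in closed form as a function of $k$ and verifying it once, in the spirit of the figure-based checks above. The only genuine nuisance I anticipate is a parity split according to whether $k+1$ is even or odd, since pairing fibres to host the cross-fibre $4$-cycles is cleanest when $k+1$ is even; the odd case would need a small three-fibre gadget, or an induction passing from $4k+4$ to $4k+8$ vertices by adjoining one fibre at a time. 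Anchoring that induction at $k=2$ (twelve vertices) with an explicit decomposition would then complete the proof for all $k \geq 2$.
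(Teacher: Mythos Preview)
Your high-level strategy is exactly the paper's: decompose $K_{4k+4}$ into three $C_4$-factors together with $4k-3$ perfect matchings, and deduce unextendibility from the coverage bound $3\cdot 2 + (4k-3)\cdot 1 = 4k+3 < 4k+4$. The paper even uses the same vertex set $\{v_i,w_i,x_i,y_i : 0 \le i \le k\}\cong\bb{Z}_{k+1}\times\bb{Z}_4$, and its first $C_4$-factor $B_{0,k}$ is precisely your within-fibre $Q_1$.

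Where you leave a gap is in actually producing $Q_2,Q_3$ and then $1$-factorising the remainder; your sketch anticipates parity cases and an inductive patch but does not complete any of them. The paper sidesteps all of this with one extra idea you are missing: exploit a \emph{bipartition} rather than the fibration. Splitting the four letters as $\{v,w\}$ versus $\{x,y\}$ breaks $K_{4k+4}$ into $K_{2k+2,2k+2}$ (cross edges) plus two disjoint copies of $K_{2k+2}$ (within-side edges). The bipartite piece decomposes uniformly into $k+1$ cyclic $C_4$-factors $B_{j,k}$, namely the $K_{2,2}$'s between $\{v_i,w_i\}$ and $\{x_{i+j},y_{i+j}\}$ with indices mod $k+1$; take $Q_1,Q_2,Q_3$ to be $B_{0,k},B_{1,k},B_{2,k}$. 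Each surplus $B_{j,k}$ for $j\ge 3$, being a disjoint union of $4$-cycles, splits trivially into two perfect matchings, giving $2(k-2)$ matchings; the two copies of $K_{2k+2}$ are classically $1$-factorable into $2k+1$ more. That yields $3+2(k-2)+(2k+1)=4k$ parties with no case analysis and no induction. The point you should take away is that once you have a uniform $C_4$-factorisation of the bipartite half, every extra $C_4$-factor converts for free into two matchings, so the ``leftover $1$-factorability'' problem you were worried about never arises.
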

\begin{proof}
	We begin by defining a family of $k+1$ graphs $B_{j,k} := (V,E_j)$ for $0 \leq j \leq k$, each on the same set of $4k+4$ vertices $V := \{v_i,w_i,x_i,y_i, : 0 \leq i \leq k \}$. The set of edges $E_j$ in the graph $B_{j,k}$ is defined as follows:
	\begin{align*}
		E_j & := \big\{ (v_i,x_{(i+j)(\text{mod }(k+1))}), (v_i,y_{(i+j)(\text{mod }(k+1))}), \\
		& \quad \quad \quad (w_i,x_{(i+j)(\text{mod }(k+1))}), (w_i,y_{(i+j)(\text{mod }(k+1))}) : 0 \leq i \leq k \big\}.
	\end{align*}

The three graphs $B_{0,2}$, $B_{1,2}$, and $B_{2,2}$ in the $k = 2$ case are depicted in Figure~\ref{fig:B_graphs}. It is clear that the graph obtained by taking the union of all edges in all sets $B_{j,k}$ for $0 \leq j \leq k$ is $K_{2k+2,2k+2}$, the complete bipartite graph on two sets of $2k+2$ vertices.
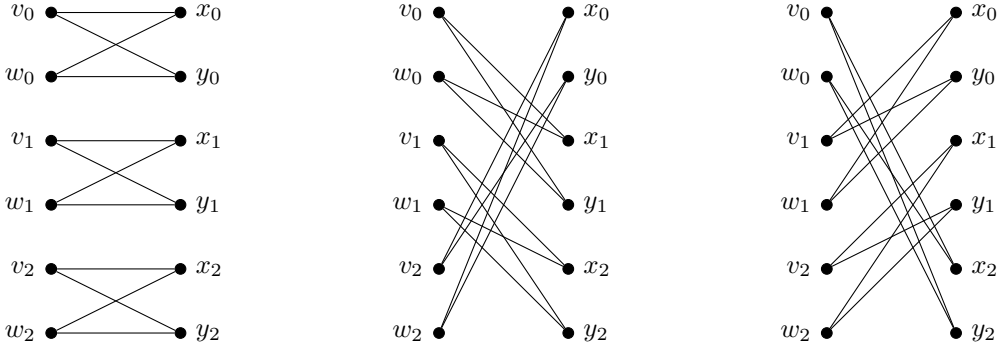
\begin{figure}[htb]
	\centering
	\begin{tikzpicture}[x=1.7cm, y=0.85cm, label distance=0cm]
		\vertex[fill] (y0) at (0,4) [label=0:$y_{0}$]{};
		\vertex[fill] (x0) at (0,5) [label=0:$x_{0}$]{};
		\vertex[fill] (w0) at (-1,4) [label=180:$w_{0}$]{};
		\vertex[fill] (v0) at (-1,5) [label=180:$v_{0}$]{};
		
		\vertex[fill] (y1) at (0,2) [label=0:$y_{1}$]{};
		\vertex[fill] (x1) at (0,3) [label=0:$x_{1}$]{};
		\vertex[fill] (w1) at (-1,2) [label=180:$w_{1}$]{};
		\vertex[fill] (v1) at (-1,3) [label=180:$v_{1}$]{};
		
		\vertex[fill] (y2) at (0,0) [label=0:$y_{2}$]{};
		\vertex[fill] (x2) at (0,1) [label=0:$x_{2}$]{};
		\vertex[fill] (w2) at (-1,0) [label=180:$w_{2}$]{};
		\vertex[fill] (v2) at (-1,1) [label=180:$v_{2}$]{};
		
		\vertex[fill] (y0b) at (3,4) [label=0:$y_{0}$]{};
		\vertex[fill] (x0b) at (3,5) [label=0:$x_{0}$]{};
		\vertex[fill] (w0b) at (2,4) [label=180:$w_{0}$]{};
		\vertex[fill] (v0b) at (2,5) [label=180:$v_{0}$]{};
		
		\vertex[fill] (y1b) at (3,2) [label=0:$y_{1}$]{};
		\vertex[fill] (x1b) at (3,3) [label=0:$x_{1}$]{};
		\vertex[fill] (w1b) at (2,2) [label=180:$w_{1}$]{};
		\vertex[fill] (v1b) at (2,3) [label=180:$v_{1}$]{};
		
		\vertex[fill] (y2b) at (3,0) [label=0:$y_{2}$]{};
		\vertex[fill] (x2b) at (3,1) [label=0:$x_{2}$]{};
		\vertex[fill] (w2b) at (2,0) [label=180:$w_{2}$]{};
		\vertex[fill] (v2b) at (2,1) [label=180:$v_{2}$]{};
	
		\vertex[fill] (y0c) at (6,4) [label=0:$y_{0}$]{};
		\vertex[fill] (x0c) at (6,5) [label=0:$x_{0}$]{};
		\vertex[fill] (w0c) at (5,4) [label=180:$w_{0}$]{};
		\vertex[fill] (v0c) at (5,5) [label=180:$v_{0}$]{};
		
		\vertex[fill] (y1c) at (6,2) [label=0:$y_{1}$]{};
		\vertex[fill] (x1c) at (6,3) [label=0:$x_{1}$]{};
		\vertex[fill] (w1c) at (5,2) [label=180:$w_{1}$]{};
		\vertex[fill] (v1c) at (5,3) [label=180:$v_{1}$]{};
		
		\vertex[fill] (y2c) at (6,0) [label=0:$y_{2}$]{};
		\vertex[fill] (x2c) at (6,1) [label=0:$x_{2}$]{};
		\vertex[fill] (w2c) at (5,0) [label=180:$w_{2}$]{};
		\vertex[fill] (v2c) at (5,1) [label=180:$v_{2}$]{};
		
		\path 
			(v0) edge (x0)
			(v0) edge (y0)
			(w0) edge (x0)
			(w0) edge (y0)
			
			(v1) edge (x1)
			(v1) edge (y1)
			(w1) edge (x1)
			(w1) edge (y1)

			(v2) edge (x2)
			(v2) edge (y2)
			(w2) edge (x2)
			(w2) edge (y2)

			(v0b) edge (x1b)
			(v0b) edge (y1b)
			(w0b) edge (x1b)
			(w0b) edge (y1b)
			
			(v1b) edge (x2b)
			(v1b) edge (y2b)
			(w1b) edge (x2b)
			(w1b) edge (y2b)

			(v2b) edge (x0b)
			(v2b) edge (y0b)
			(w2b) edge (x0b)
			(w2b) edge (y0b)

			(v0c) edge (x2c)
			(v0c) edge (y2c)
			(w0c) edge (x2c)
			(w0c) edge (y2c)
			
			(v1c) edge (x0c)
			(v1c) edge (y0c)
			(w1c) edge (x0c)
			(w1c) edge (y0c)

			(v2c) edge (x1c)
			(v2c) edge (y1c)
			(w2c) edge (x1c)
			(w2c) edge (y1c)
		;
	\end{tikzpicture}
	\caption{The graphs $B_{0,2}$ (left), $B_{1,2}$ (center), and $B_{2,2}$ (right), used in the construction of a UPB of size $12$ in $(\bb{C}^2)^{\otimes 8}$.}\label{fig:B_graphs}
\end{figure}

We now define three sets of states $S^{(j)} = \{\ket{v_i^{(j)}}, \ket{w_i^{(j)}}, \ket{x_i^{(j)}}, \ket{y_i^{(j)}} : 0 \leq i \leq k \} \subseteq \bb{C}^2$ that have orthogonality graphs $B_{j,k}$ for $0 \leq j \leq 2$ respectively. To this end, let $\{\ket{b_i},\ket{b_i^\perp}\}_{i=0}^{2k+1}$ be distinct orthonormal bases of $\bb{C}^2$ (i.e., $\braket{b_i}{b_i^\perp} = 0$ for all $i$, but $|\braket{b_i}{b_j}|,|\braket{b_i}{b_j^\perp}|,|\braket{b_i^\perp}{b_j^\perp}| \notin \{0,1\}$ whenever $i \neq j$). Then let
\begin{align*}
	\ket{v_i^{(j)}} := \ket{w_i^{(j)}} := \ket{b_i} \quad \text{ and } \quad \ket{x_i^{(j)}} := \ket{y_i^{(j)}} := \ket{b_{(i-j)(\text{mod }(k+1))}^\perp},
\end{align*}
for $0 \leq j \leq 2$, which clearly results in the desired orthogonality graphs. Furthermore, each set $S^{(j)}$ has the property that any state $\ket{z} \in \bb{C}^2$ can be orthogonal to at most two elements of $S^{(j)}$ -- a fact that we will use later when discussing unextendibility.

For each of the remaining $k - 2$ graphs $B_{j,k}$ ($3 \leq j \leq k$), we construct sets of product states $S^{(2j-3,2j-2)} = \{\ket{v_i^{(2j-3,2j-2)}}, \ket{w_i^{(2j-3,2j-2)}}, \ket{x_i^{(2j-3,2j-2)}}, \ket{y_i^{(2j-3,2j-2)}} : 0 \leq i \leq k \} \subseteq \bb{C}^2 \otimes \bb{C}^2$ that have orthogonality graphs $B_{j,k}$ for $3 \leq j \leq k$. To this end, define
\begin{align*}
	\ket{v_i^{(2j-3,2j-2)}} & := \ket{b_i} \otimes \ket{b_i} \\
	\ket{w_i^{(2j-3,2j-2)}} & := \ket{b_{i+(k+1)}} \otimes \ket{b_{i+(k+1)}} \\
	\ket{x_i^{(2j-3,2j-2)}} & := \ket{b_{(i-j)(\text{mod }(k+1))}^\perp} \otimes \ket{b_{(i-j)(\text{mod }(k+1))+(k+1)}^\perp} \\
	\ket{y_i^{(2j-3,2j-2)}} & := \ket{b_{(i-j)(\text{mod }(k+1))+(k+1)}^\perp} \otimes \ket{b_{(i-j)(\text{mod }(k+1))}^\perp},
\end{align*}
which results in the desired orthogonality graphs.

We now turn our attention to the complement graph of $K_{2k+2,2k+2}$, which is simply the disjoint union of two disjoint copies of $K_{2k+2}$, the complete graph on $2k+2$ vertices. We denote this graph by $K_{2k+2}^2$, and it is depicted in the $k = 2$ case in Figure~\ref{fig:2complete_graphs}. The graph $K_{2k+2}^2$ will be the orthogonality graph of the remaining $4k - (3 + 2(k-2)) = 2k + 1$ parties.
\begin{figure}[htb]
	\centering
	\begin{tikzpicture}[x=1.5cm, y=1.5cm, label distance=0cm]
		\vertex[fill] (v0) at (0,1) [label=90:$v_0$]{};
		\vertex[fill] (v1) at (0.866,0.5) [label=15:$w_0$]{};
		\vertex[fill] (v2) at (0.866,-0.5) [label=345:$v_1$]{};
		\vertex[fill] (v3) at (0,-1) [label=270:$w_1$]{};
		\vertex[fill] (v4) at (-0.866,-0.5) [label=195:$v_2$]{};
		\vertex[fill] (v5) at (-0.866,0.5) [label=165:$w_2$]{};
		
		\vertex[fill] (w0) at (3,1) [label=90:$x_0$]{};
		\vertex[fill] (w1) at (3.866,0.5) [label=15:$y_0$]{};
		\vertex[fill] (w2) at (3.866,-0.5) [label=345:$x_1$]{};
		\vertex[fill] (w3) at (3,-1) [label=270:$y_1$]{};
		\vertex[fill] (w4) at (2.134,-0.5) [label=195:$x_2$]{};
		\vertex[fill] (w5) at (2.134,0.5) [label=165:$y_2$]{};
			
		\path 
			(v0) edge (v1)
			(v0) edge (v2)
			(v0) edge (v3)
			(v0) edge (v4)
			(v0) edge (v5)
			(v1) edge (v2)
			(v1) edge (v3)
			(v1) edge (v4)
			(v1) edge (v5)
			(v2) edge (v3)
			(v2) edge (v4)
			(v2) edge (v5)
			(v3) edge (v4)
			(v3) edge (v5)
			(v4) edge (v5)
			
			(w0) edge (w1)
			(w0) edge (w2)
			(w0) edge (w3)
			(w0) edge (w4)
			(w0) edge (w5)
			(w1) edge (w2)
			(w1) edge (w3)
			(w1) edge (w4)
			(w1) edge (w5)
			(w2) edge (w3)
			(w2) edge (w4)
			(w2) edge (w5)
			(w3) edge (w4)
			(w3) edge (w5)
			(w4) edge (w5)
		;
	\end{tikzpicture}
	\caption{The graph $K_{6}^2$ that is the disjoint union of two copies of $K_{6}$.}\label{fig:2complete_graphs}
\end{figure}
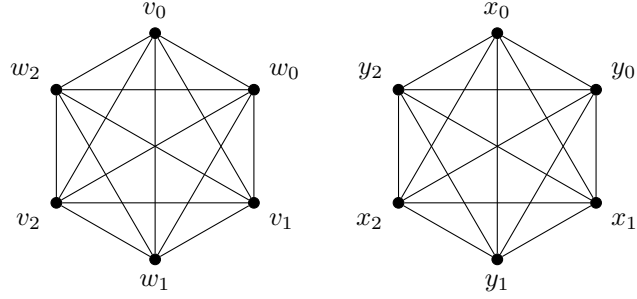

Our goal now is to define sets of states $S^{(j)} = \{\ket{v_i^{(j)}}, \ket{w_i^{(j)}}, \ket{x_i^{(j)}}, \ket{y_i^{(j)}} : 0 \leq i \leq k \} \subseteq \bb{C}^2$ for $2k-1 \leq j \leq 4k - 1$ such that their orthogonality graphs, when taken together, contain all edges of $K_{2k+2}^2$. To this end, we recall that it is well-known that $K_{2k+2}$ always has a $1$-factorization \cite[Theorem~9.1]{Har69}, so $K_{2k+2}^2$ clearly has a $1$-factorization as well (see Figure~\ref{fig:2complete_graphs_1fac}). This $1$-factorization decomposes $K_{2k+2}^2$ into $2k+1$ distinct $1$-regular spanning subgraphs, and any such graph is clearly the orthogonality graph of the set of states $\{\ket{b_0},\ket{b_0^\perp},\ldots,\ket{b_{2k+1}},\ket{b_{2k+1}^\perp}\} \subset \bb{C}^2$ (under an appropriate labelling of the vertices).

\begin{figure}[htb]
	\centering
	\begin{tikzpicture}[x=0.9cm, y=0.9cm, label distance=0cm]
    \draw[draw=black] (5.5,0.5) -- (5.5,-8.5);

		\vertex[fill] (v00) at (0,1) [label=90:$v_0$]{};
		\vertex[fill] (v01) at (0.866,0.5) [label=15:$w_0$]{};
		\vertex[fill] (v02) at (0.866,-0.5) [label=345:$v_1$]{};
		\vertex[fill] (v03) at (0,-1) [label=270:$w_1$]{};
		\vertex[fill] (v04) at (-0.866,-0.5) [label=195:$v_2$]{};
		\vertex[fill] (v05) at (-0.866,0.5) [label=165:$w_2$]{};
		
		\vertex[fill] (w00) at (3,1) [label=90:$x_0$]{};
		\vertex[fill] (w01) at (3.866,0.5) [label=15:$y_0$]{};
		\vertex[fill] (w02) at (3.866,-0.5) [label=345:$x_1$]{};
		\vertex[fill] (w03) at (3,-1) [label=270:$y_1$]{};
		\vertex[fill] (w04) at (2.134,-0.5) [label=195:$x_2$]{};
		\vertex[fill] (w05) at (2.134,0.5) [label=165:$y_2$]{};
			
		\path 
			(v00) edge (v01)
			(v02) edge (v03)
			(v04) edge (v05)
			
			(w00) edge (w01)
			(w02) edge (w03)
			(w04) edge (w05)
		;

		\vertex[fill] (v10) at (8,1) [label=90:$v_0$]{};
		\vertex[fill] (v11) at (8.866,0.5) [label=15:$w_0$]{};
		\vertex[fill] (v12) at (8.866,-0.5) [label=345:$v_1$]{};
		\vertex[fill] (v13) at (8,-1) [label=270:$w_1$]{};
		\vertex[fill] (v14) at (7.134,-0.5) [label=195:$v_2$]{};
		\vertex[fill] (v15) at (7.134,0.5) [label=165:$w_2$]{};
		
		\vertex[fill] (w10) at (11,1) [label=90:$x_0$]{};
		\vertex[fill] (w11) at (11.866,0.5) [label=15:$y_0$]{};
		\vertex[fill] (w12) at (11.866,-0.5) [label=345:$x_1$]{};
		\vertex[fill] (w13) at (11,-1) [label=270:$y_1$]{};
		\vertex[fill] (w14) at (10.134,-0.5) [label=195:$x_2$]{};
		\vertex[fill] (w15) at (10.134,0.5) [label=165:$y_2$]{};
			
		\path 
			(v11) edge (v12)
			(v13) edge (v14)
			(v15) edge (v10)
			
			(w11) edge (w12)
			(w13) edge (w14)
			(w15) edge (w10)
		;

    \draw[draw=black] (0,-2) -- (11,-2);

		\vertex[fill] (v20) at (0,-3) [label=90:$v_0$]{};
		\vertex[fill] (v21) at (0.866,-3.5) [label=15:$w_0$]{};
		\vertex[fill] (v22) at (0.866,-4.5) [label=345:$v_1$]{};
		\vertex[fill] (v23) at (0,-5) [label=270:$w_1$]{};
		\vertex[fill] (v24) at (-0.866,-4.5) [label=195:$v_2$]{};
		\vertex[fill] (v25) at (-0.866,-3.5) [label=165:$w_2$]{};
		
		\vertex[fill] (w20) at (3,-3) [label=90:$x_0$]{};
		\vertex[fill] (w21) at (3.866,-3.5) [label=15:$y_0$]{};
		\vertex[fill] (w22) at (3.866,-4.5) [label=345:$x_1$]{};
		\vertex[fill] (w23) at (3,-5) [label=270:$y_1$]{};
		\vertex[fill] (w24) at (2.134,-4.5) [label=195:$x_2$]{};
		\vertex[fill] (w25) at (2.134,-3.5) [label=165:$y_2$]{};
			
		\path 
			(v21) edge (v24)
			(v20) edge (v22)
			(v23) edge (v25)
			
			(w21) edge (w24)
			(w20) edge (w22)
			(w23) edge (w25)
		;

		\vertex[fill] (v30) at (8,-3) [label=90:$v_0$]{};
		\vertex[fill] (v31) at (8.866,-3.5) [label=15:$w_0$]{};
		\vertex[fill] (v32) at (8.866,-4.5) [label=345:$v_1$]{};
		\vertex[fill] (v33) at (8,-5) [label=270:$w_1$]{};
		\vertex[fill] (v34) at (7.134,-4.5) [label=195:$v_2$]{};
		\vertex[fill] (v35) at (7.134,-3.5) [label=165:$w_2$]{};
		
		\vertex[fill] (w30) at (11,-3) [label=90:$x_0$]{};
		\vertex[fill] (w31) at (11.866,-3.5) [label=15:$y_0$]{};
		\vertex[fill] (w32) at (11.866,-4.5) [label=345:$x_1$]{};
		\vertex[fill] (w33) at (11,-5) [label=270:$y_1$]{};
		\vertex[fill] (w34) at (10.134,-4.5) [label=195:$x_2$]{};
		\vertex[fill] (w35) at (10.134,-3.5) [label=165:$y_2$]{};
			
		\path 
			(v32) edge (v35)
			(v30) edge (v34)
			(v31) edge (v33)
			
			(w32) edge (w35)
			(w30) edge (w34)
			(w31) edge (w33)
		;

    \draw[draw=black] (0,-6) -- (11,-6);

		\vertex[fill] (v40) at (0,-7) [label=90:$v_0$]{};
		\vertex[fill] (v41) at (0.866,-7.5) [label=15:$w_0$]{};
		\vertex[fill] (v42) at (0.866,-8.5) [label=345:$v_1$]{};
		\vertex[fill] (v43) at (0,-9) [label=270:$w_1$]{};
		\vertex[fill] (v44) at (-0.866,-8.5) [label=195:$v_2$]{};
		\vertex[fill] (v45) at (-0.866,-7.5) [label=165:$w_2$]{};
		
		\vertex[fill] (w40) at (3,-7) [label=90:$x_0$]{};
		\vertex[fill] (w41) at (3.866,-7.5) [label=15:$y_0$]{};
		\vertex[fill] (w42) at (3.866,-8.5) [label=345:$x_1$]{};
		\vertex[fill] (w43) at (3,-9) [label=270:$y_1$]{};
		\vertex[fill] (w44) at (2.134,-8.5) [label=195:$x_2$]{};
		\vertex[fill] (w45) at (2.134,-7.5) [label=165:$y_2$]{};
			
		\path 
			(v40) edge (v43)
			(v41) edge (v45)
			(v42) edge (v44)
			
			(w40) edge (w43)
			(w41) edge (w45)
			(w42) edge (w44)
		;
	\end{tikzpicture}
	\caption{A $1$-factorization of $K_{6}^2$, which is useful for constructing a UPB of size $12$ in $(\bb{C}^2)^{\otimes 8}$.}\label{fig:2complete_graphs_1fac}
\end{figure}
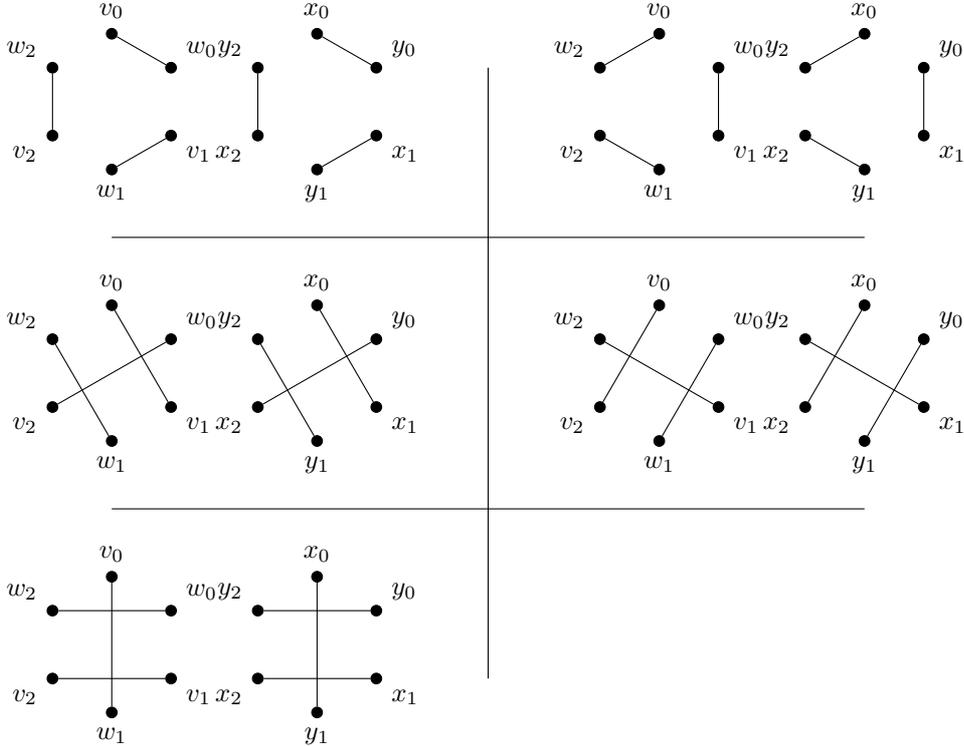

Since the union of the sets of edges present in all of the graphs considered so far is the complete graph $K_{4k+4}$, we know that the states in the set
\begin{align*}
	\cl{S} := \left\{ \bigotimes_{j=1}^{4k} \ket{v_i^{(j)}}, \bigotimes_{j=1}^{4k} \ket{w_i^{(j)}}, \bigotimes_{j=1}^{4k} \ket{x_i^{(j)}}, \bigotimes_{j=1}^{4k} \ket{y_i^{(j)}} : 0 \leq i \leq k \right\}
\end{align*}
are mutually orthogonal. To see why this set is unextendible, recall that any non-zero product state can be orthogonal to at most $2$ states on each of the first $3$ subsystems, and at most $1$ state on each of the remaining $4k-3$ subsystems. It follows that any nonzero product state can be orthogonal to at most $2\cdot 3 + 1 \cdot (4k-3) = 4k+3$ of these product states. Since no nonzero product state can be orthogonal to all $4k+4$ members of $\cl{S}$, it is unextendible, which completes the proof.
\end{proof}

\section{Proof of Minimality}\label{sec:minimality_proof}

We now turn our attention to the problem of proving that the UPBs constructed in Section~\ref{sec:construction} are the smallest possible. Because the main result of \cite{AL01} tells us that the minimum cardinality of a UPB in $(\bb{C}^2)^{\otimes 4k}$ is at least $4k+2$, we only have to prove that there is no UPB of cardinality $4k+2$ when $k \geq 2$ and no UPB of cardinality $4k+3$ when $k \geq 3$. While the proof that there is no UPB of cardinality $4k+2$ is relatively straightforward, the proof that there is no UPB of cardinality $4k+3$ is more involved and consists of many cases and sub-cases. We make use of a C script to solve some of the messier cases, while we solve the simpler cases by hand.

For the entirety of this section, we make use of \emph{partial orthogonality graphs}, which are the same as orthogonality graphs, except perhaps with some conditions unspecified. For example, in Figure~\ref{fig:p2_case1} the lack of lines indicating orthogonality between shaded regions does not signify that there are no regions orthogonal to each other, but rather that we just don't care \emph{which} regions are orthogonal to each other. Similarly, in Figure~\ref{fig:p3_case1} there are vertices that are drawn outside of any shaded region. This is intended to mean that we don't care what the shaded region involving that vertex looks like. In general, we only specify the pieces of the orthogonality graphs that are relevant for our proofs.

It will be convenient for us to let $P_1, \ldots, P_{4k}$ denote the $4k$ different parties. We also let $M_j$ denote the maximum number of vertices contained within a single shaded region on party $P_j$ (which is equal to the maximum number of states in the UPB that are equal to each other on party $P_j$), and let $C_{n,j}$ denote the number of distinct shaded regions containing exactly $n$ vertices on party $j$ (i.e., $C_{n,j}$ is the number of distinct group of exactly $n$ states in the UPB that are equal to each other on party $P_j$). For example, in Figure~\ref{fig:2dim_eg_compact}, if the graphs correspond to parties $P_1$, $P_2$ and $P_3$, then $M_1 = 4$, $M_2 = M_3 = 2$, $C_{3,1} = 1$, $C_{4,1} = 1$, $C_{1,2} = 1$, $C_{2,2} = 3$, $C_{1,3} = 5$, and $C_{2,3} = 1$.

\begin{lemma}\label{lem:upb_p2}
	There is no UPB in $(\bb{C}^2)^{\otimes 4k}$ of cardinality $4k+2$ when $k \geq 2$.
\end{lemma}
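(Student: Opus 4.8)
The plan is to work entirely within the orthogonality-graph / shaded-region picture and to recast unextendibility as a covering condition: by the discussion in Section~\ref{sec:prelims}, a product basis is extendible precisely when one can choose a single shaded region on each of the $4k$ parties so that every one of the $4k+2$ vertices lies in at least one chosen region. Thus I would assume, for contradiction, that $\cl{S}$ is a UPB with $|\cl{S}| = 4k+2$, and then aim either to exhibit such a covering transversal (contradicting unextendibility) or to contradict the product-basis requirement that every pair of the $4k+2$ vertices be orthogonal on some party.

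First I would set up the basic double count. Writing $E_j$ for the number of edges on party $P_j$, the product-basis condition forces $\sum_{j=1}^{4k} E_j \ge \binom{4k+2}{2} = (2k+1)(4k+1)$, since each of the $\binom{4k+2}{2}$ pairs must be orthogonal on at least one party. On the other hand, on party $P_j$ the regions pair off into matched complete-bipartite blocks $(A_i,B_i)$ with $\sum_i(|A_i|+|B_i|) = 4k+2$, and since $|A_i|,|B_i| \le M_j$ one gets $E_j = \sum_i |A_i||B_i| \le M_j \sum_i \tfrac{1}{2}(|A_i|+|B_i|) = (2k+1)M_j$. Combining the two bounds yields $\sum_{j=1}^{4k} M_j \ge 4k+1$.

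The decisive step is then a parity argument built on the corollary to Lemma~\ref{lem:all_pos}, namely that every party has an even number of regions. In the extremal situation $\sum_j M_j = 4k+1$ the chain above is squeezed to equalities, and equality in $E_j \le (2k+1)M_j$ forces each party to be partitioned into equal-sized matched blocks, so $M_j \mid (4k+2)$ with $(4k+2)/M_j$ even; since $4k+2 = 2(2k+1)$ this makes each $M_j$ an odd divisor of $2k+1$, hence odd. But then $\sum_{j=1}^{4k} M_j$ is a sum of $4k$ odd numbers, hence even, contradicting $\sum_j M_j = 4k+1$ being odd. It is precisely the even number $4k$ of parties that makes the parities line up to yield this contradiction.

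The main obstacle is the slack case $\sum_j M_j \ge 4k+2$, where $\sum_j E_j$ may strictly exceed $\binom{4k+2}{2}$ (some pairs are orthogonal on more than one party), so the clean equality analysis no longer applies and $\sum_j M_j$ can be made large through repeated large regions. Here I would argue that redundant orthogonalities are wasteful: a party whose orthogonalities are all supplied elsewhere contributes nothing new toward covering the $\binom{4k+2}{2}$ pairs, yet by Lemma~\ref{lem:all_pos} it still splits into matched regions usable for covering vertices. I would make this precise by a short case analysis on the largest region size $M := \max_j M_j$ together with the counts $C_{n,j}$: if some $M_j$ is large, its small complement can be covered using the remaining parties to assemble an extending product state; if all $M_j$ are small, the bound $E_j \le (2k+1)M_j$, the edge budget $\sum_j E_j \ge (2k+1)(4k+1)$, and the even-region parity again pin down the structure tightly enough to force a covering transversal. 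I expect this reconciliation of the covering requirement with the edge budget to be the technical heart of the argument.
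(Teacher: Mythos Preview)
Your double-count $\sum_j E_j \ge \binom{4k+2}{2}$ combined with $E_j \le (2k+1)M_j$ to obtain $\sum_j M_j \ge 4k+1$, and the parity argument disposing of the equality case $\sum_j M_j = 4k+1$, are both correct and rather elegant. But the proof is genuinely incomplete: you yourself flag the slack case $\sum_j M_j \ge 4k+2$ as ``the technical heart of the argument'' and then do not carry it out. That gap is real, and your sketched tools do not close it.

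Concretely, once you observe (as you allude to) that $M_j \ge 3$ immediately yields a covering of all $4k+2$ vertices and hence extendibility, you are left with every $M_j \in \{1,2\}$, and the slack case is exactly ``at least two parties have $M_j = 2$''. Here your bound $E_j \le (2k+1)M_j = 4k+2$ is too loose; one quickly sees by the even-region parity that any party with $M_j=2$ must have $C_{2,j}$ even and, using a disjoint-region covering argument between two such parties, that in fact $C_{2,j}=2$, which caps $E_j$ at $2k+3$. Even so, plugging $t$ parties at $2k+3$ and $4k-t$ parties at $2k+1$ into the edge budget only yields $t \ge k+1$, and nothing you have written bounds $t$ from above. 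Your alternative plan of ``forcing a covering transversal'' also fails here: one can arrange three parties each with $C_{2,j}=2$ whose size-$2$ regions sit on a common set of four vertices in a pairwise-overlapping pattern, so that any choice of one region per party covers at most four vertices, and with the remaining $4k-3$ parties you reach only $4k+1$, not $4k+2$.

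The paper's proof takes the same opening step ($M_j \le 2$) but then replaces your global bound by a direct two-case split on how many parties have $M_j=2$. For at most one such party it uses the sharper $E_j \le 4k+1$ (not $4k+2$) to undercut the edge budget by exactly one. For two or more such parties it establishes $C_{2,j}\in\{0,2\}$, shows by an explicit configuration analysis that at most three parties can have $C_{2,j}=2$, and then does a refined edge count on those three parties (accounting for forced overlaps among their edges) to again fall short of $\binom{4k+2}{2}$ when $k\ge 2$. It is precisely this structural bound $t \le 3$ and the overlap-aware count that your outline is missing.
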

\begin{proof}
	Suppose for a contradiction that there exists a UPB of cardinality $4k+2$ in $(\bb{C}^2)^{\otimes 4k}$. If it were the case that $M_j \geq 3$ for some $j$, then we could find a product state that is orthogonal to the $3$ corresponding states on that party and to any $1$ of the product states on each of the remaining $4k-1$ parties, for a total of all $4k+2$ elements of the UPB, which violates unextendibility. Hence $M_j \leq 2$ for all $1 \leq j \leq 4k$. We now split into two cases.
	
	\noindent {\bf Case 1:} There is at most one party $P_j$ with $M_j = 2$.

	Between the $4k$ parties, there must be a total of $(4k+2)(4k+1)/2 = 8k^2 + 6k + 1$ edges in their orthogonality graphs. The $4k-1$ parties other than $P_j$ must be the disjoint union of $2k+1$ copies of $K_{1,1}$, for a total of at most $(4k-1)(2k+1) = 8k^2 + 2k - 1$ edges. The remaining party $P_j$ then needs at least $(8k^2 + 6k + 1) - (8k^2 + 2k - 1) = 4k + 2$ edges. It is easily seen, however, that the largest number of edges that the orthogonality graph of party $P_j$ can have is obtained when it is the disjoint union of $k$ copies of $K_{2,2}$ and one copy of $K_{1,1}$, which results in only $4k + 1$ edges, which gives the desired contradiction.
	
	\noindent {\bf Case 2:} There are two (or more) parties $P_i \neq P_j$ with $M_i = M_j = 2$.

	It is not difficult to see that $C_{2,\ell} \in \{0,2\}$ for all $\ell$ or else either Lemma~\ref{lem:all_pos} or unextendibility is violated. Furthermore, it is not difficult to see that the unique (up to repositioning vertices and parties) way to have $C_{2,\ell} = 2$ for $3$ distinct values of $\ell$ is given in Figure~\ref{fig:p2_case1}, and there is no way to have $C_{2,\ell}$ for a fourth value of $\ell$ without violating unextendibility. A simple calculation reveals that the maximum number of edges that can be obtained from the orthogonality graphs of these $3$ parties is $(2k + 3) + 2(2k+2) = 6k + 7$. The orthogonality graphs of the remaining $4k-3$ parties are the disjoint union of $2k+1$ copies of $K_{1,1}$, so they each have $2k+1$ edges. Thus the total number of edges among the orthogonality graphs of all $4k$ parties is at most $(6k+7) + (4k-3)(2k+1) = 8k^2 + 4k + 4$. This quantity is smaller than the $8k^2 + 6k + 1$ required edges when $k \geq 2$, which gives the desired contradiction.
	
\begin{figure}[htb]
	\centering
	\begin{tikzpicture}[x=1.4cm, y=1.4cm, label distance=0cm]     
    \filldraw[fill=lightgray,line width=0.34cm,line join=round,draw=black] (-0.541,0.841) -- (-0.756,-0.655) -- (-0.6485,0.093) -- cycle;
    \filldraw[fill=lightgray,line width=0.30cm,line join=round,draw=lightgray] (-0.541,0.841) -- (-0.756,-0.655) -- (-0.6485,0.093) -- cycle;
    \filldraw[fill=lightgray,line width=0.34cm,line join=round,draw=black] (-0.910,0.415) -- (-0.990,-0.142) -- (-0.9500,0.1365) -- cycle;
    \filldraw[fill=lightgray,line width=0.30cm,line join=round,draw=lightgray] (-0.910,0.415) -- (-0.990,-0.142) -- (-0.9500,0.1365) -- cycle;

    \draw[line width=0.02cm,draw=black,fill=lightgray] (0.541,0.841) circle (0.2cm);
    \draw[line width=0.02cm,draw=black,fill=lightgray] (0,1) circle (0.2cm);
    \draw[line width=0.02cm,draw=black,fill=lightgray] (0.282,-0.959) circle (0.2cm);
    \draw[line width=0.02cm,draw=black,fill=lightgray] (-0.282,-0.959) circle (0.2cm);

		\vertex[fill] (v00) at (0,1) []{};
		\vertex[fill] (v01) at (-0.541,0.841) []{};
		\vertex[fill] (v02) at (-0.910,0.415) []{};
		\vertex[fill] (v03) at (-0.990,-0.142) []{};
		\vertex[fill] (v04) at (-0.756,-0.655) []{};
		\vertex[fill] (v05) at (-0.282,-0.959) []{};
		\vertex[fill] (v06) at (0.282,-0.959) []{};
		\vertex[fill] (v010) at (0.541,0.841) []{};
		\draw[thin,fill] (0.837,0.547) circle(0.02);
		\draw[thin,fill] (0.910,0.415) circle(0.02);
		\draw[thin,fill] (0.961,0.275) circle(0.02);
		\draw[thin,fill] (0.845,-0.535) circle(0.02);
		\draw[thin,fill] (0.756,-0.655) circle(0.02);
		\draw[thin,fill] (0.649,-0.760) circle(0.02);

    \filldraw[fill=lightgray,line width=0.34cm,line join=round,draw=black] (2.9590,0.8410) -- (2.5900,0.4150) -- (2.7745,0.6280) -- cycle;
    \filldraw[fill=lightgray,line width=0.30cm,line join=round,draw=lightgray] (2.9590,0.8410) -- (2.5900,0.4150) -- (2.7745,0.6280) -- cycle;
    \filldraw[fill=lightgray,line width=0.34cm,line join=round,draw=black] (2.5100,-0.1420) -- (2.7440,-0.6550) -- (2.6270,-0.3985) -- cycle;
    \filldraw[fill=lightgray,line width=0.30cm,line join=round,draw=lightgray] (2.5100,-0.1420) -- (2.7440,-0.6550) -- (2.6270,-0.3985) -- cycle;

    \draw[line width=0.02cm,draw=black,fill=lightgray] (3.2180,-0.9590) circle (0.2cm);
    \draw[line width=0.02cm,draw=black,fill=lightgray] (3.782,-0.959) circle (0.2cm);
    \draw[line width=0.02cm,draw=black,fill=lightgray] (3.5,1) circle (0.2cm);
    \draw[line width=0.02cm,draw=black,fill=lightgray] (4.041,0.841) circle (0.2cm);

		\vertex[fill] (v10) at (3.5,1) []{};
		\vertex[fill] (v11) at (2.9590,0.8410) []{};
		\vertex[fill] (v12) at (2.5900,0.4150) []{};
		\vertex[fill] (v13) at (2.5100,-0.1420) []{};
		\vertex[fill] (v14) at (2.7440,-0.6550) []{};
		\vertex[fill] (v15) at (3.2180,-0.9590) []{};
		\vertex[fill] (v16) at (3.782,-0.959) []{};
		\vertex[fill] (v110) at (4.041,0.841) []{};
		\draw[thin,fill] (4.337,0.547) circle(0.02);
		\draw[thin,fill] (4.410,0.415) circle(0.02);
		\draw[thin,fill] (4.461,0.275) circle(0.02);
		\draw[thin,fill] (4.345,-0.535) circle(0.02);
		\draw[thin,fill] (4.256,-0.655) circle(0.02);
		\draw[thin,fill] (4.149,-0.760) circle(0.02);

    \filldraw[fill=lightgray,line width=0.34cm,line join=round,draw=black] (6.31,-0.1420) -- (6.0900,0.4150) -- (6.3,-0.1420) -- (6.2440,-0.6550) -- cycle;
    \filldraw[fill=lightgray,line width=0.30cm,line join=round,draw=lightgray] (6.31,-0.1420) -- (6.0900,0.4150) -- (6.3,-0.1420) -- (6.2440,-0.6550) -- cycle;
    \filldraw[fill=lightgray,line width=0.34cm,line join=round,draw=black] (5.7419,0.5691) -- (6.4590,0.8410) -- (5.7419,0.5691) -- (6.0100,-0.1420) -- cycle;
    \filldraw[fill=lightgray,line width=0.30cm,line join=round,draw=lightgray] (5.7419,0.5691) -- (6.4590,0.8410) -- (5.7419,0.5691) -- (6.0100,-0.1420) -- cycle;

    \draw[line width=0.02cm,draw=black,fill=lightgray] (6.7180,-0.9590) circle (0.2cm);
    \draw[line width=0.02cm,draw=black,fill=lightgray] (7.282,-0.959) circle (0.2cm);
    \draw[line width=0.02cm,draw=black,fill=lightgray] (7,1) circle (0.2cm);
    \draw[line width=0.02cm,draw=black,fill=lightgray] (7.541,0.841) circle (0.2cm);

		\vertex[fill] (v20) at (7,1) []{};
		\vertex[fill] (v21) at (6.4590,0.8410) []{};
		\vertex[fill] (v22) at (6.0900,0.4150) []{};
		\vertex[fill] (v23) at (6.0100,-0.1420) []{};
		\vertex[fill] (v24) at (6.2440,-0.6550) []{};
		\vertex[fill] (v25) at (6.7180,-0.9590) []{};
		\vertex[fill] (v26) at (7.282,-0.959) []{};
		\vertex[fill] (v210) at (7.541,0.841) []{};
		\draw[thin,fill] (7.837,0.547) circle(0.02);
		\draw[thin,fill] (7.910,0.415) circle(0.02);
		\draw[thin,fill] (7.961,0.275) circle(0.02);
		\draw[thin,fill] (7.845,-0.535) circle(0.02);
		\draw[thin,fill] (7.756,-0.655) circle(0.02);
		\draw[thin,fill] (7.649,-0.760) circle(0.02);
	\end{tikzpicture}
	\caption{Partial orthogonality graphs of three parties that each have two sets of two equal states, used in the proof of case~2 of Lemma~\ref{lem:upb_p2}. There is no way to add another pair of equal states on any party without violating unextendibility.}\label{fig:p2_case1}
\end{figure}
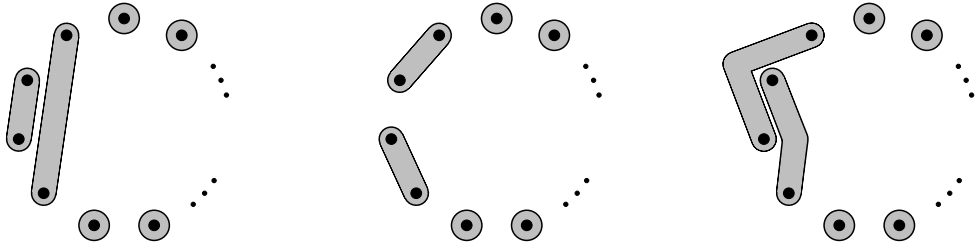
\end{proof}

Note that the hypothesis of Lemma~\ref{lem:upb_p2} that $k \geq 2$ really is required, since we have $8k^2 + 4k + 4 \geq 8k^2 + 6k + 1$ in case~2 of the proof of the lemma when $k = 1$, so it may be possible to fit all of the required edges into the orthogonality graphs. Indeed, it was shown in \cite{Fen06} that a UPB consisting of $4k+2$ states in $(\bb{C}^2)^{\otimes 4k}$ exists in the $k = 1$ case.

We now turn our attention to proving that there is no UPB of cardinality $4k+3$ when $k \geq 3$. The idea and techniques used in the proof of this statement are quite similar to the $4k+2$ case, but there are more cases to consider.

\begin{lemma}\label{lem:upb_p3}
	There is no UPB in $(\bb{C}^2)^{\otimes 4k}$ of cardinality $4k+3$ when $k \geq 3$.
\end{lemma}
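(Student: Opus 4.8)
The plan is to follow the template of Lemma~\ref{lem:upb_p2}: first pin down the coarse structure of the orthogonality graphs (the possible values of the $M_j$), then use unextendibility to constrain how much the parties can ``group'' states into large shaded regions, and finally derive a contradiction with the number of edges a product basis must contain. The first step is to show that $2 \le M_j \le 3$ for every party $P_j$. For the upper bound, suppose $M_j \ge 4$: choosing $\ket{z_j}$ orthogonal to the common direction of a shaded region of size $\ge 4$ already covers at least $4$ states, the remaining at most $4k-1$ states can be injectively assigned to the remaining $4k-1$ parties, and on each such party we pick the vector orthogonal to its assigned state (possible since, by Lemma~\ref{lem:all_pos}, no vertex is isolated on any party), yielding a product state orthogonal to all $4k+3$ members of $\cl{S}$ and contradicting unextendibility. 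For the lower bound, each party's graph is a disjoint union of complete bipartite graphs with no isolated vertices, and since $4k+3$ is odd at least one component has odd order, forcing a side of size $\ge 2$; hence $M_j \ge 2$.

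Next I would set up the edge count. A product basis on $4k+3$ states must realise all $\binom{4k+3}{2} = 8k^2 + 10k + 3$ edges. On the other hand the sparsest admissible single-party graph (no isolated vertices, parts of size $\le 3$, odd order) is one copy of $K_{1,2}$ together with $2k$ copies of $K_{1,1}$, carrying $2k+2$ edges; so every party carries at least $2k+2$ edges, the $4k$ parties carry at least $8k^2+8k$ edges, and the \emph{excess} $\sum_j(\mathrm{edges}_j-(2k+2))$ must be at least $2k+3$. Each unit of excess comes from enlarging components: an extra $K_{2,2}$ contributes excess $2$, a pair of extra $K_{1,2}$ components contributes excess $1$, and size-$3$ regions (available only when $M_j=3$) contribute more. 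The whole difficulty is therefore to prove that unextendibility caps this excess below $2k+3$.

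The third step is the case analysis, organised as in Lemma~\ref{lem:upb_p2} by the number of parties with $M_j = 3$ and, among the $M_j=2$ parties, by the values $C_{2,j}$ and by how the pairs of equal states on different parties overlap. The mechanism in each case is the one from the previous lemma: whenever several parties group many states into pairs or triples, unextendibility forces rigid alignment constraints (the analogue of ``$C_{2,\ell}\in\{0,2\}$ and at most three parties may carry pairs''), and either one can choose a shaded region on a few parties whose union already covers all $4k+3$ vertices, exhibiting an extension, or the excess realisable under those constraints falls short of $2k+3$. The structured, low-grouping configurations can be dispatched by hand exactly as before, but the number of ways that compatible pairings and size-$3$ regions can be distributed over the $4k$ parties grows quickly, so the bookkeeping for the many overlapping sub-cases is handled by an exhaustive C-script search.

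The main obstacle is precisely this last point. Unlike the $4k+2$ case, where a single high-$M$ party could contribute excess at most $2k$ and two or more were forced into one rigid picture, here a single party can in principle carry excess of order $2k$ (many $K_{2,2}$'s) or more (with $K_{3,3}$'s), so the cap on total excess cannot be obtained one party at a time. The real work is to show that the \emph{joint} unextendibility of all $4k$ parties is incompatible with enough of them being simultaneously heavily grouped --- equivalently, that any family of region pairings and triples rich enough to supply $2k+3$ excess edges must admit a covering choice of shaded regions, one per party, hence an extension. It is this combinatorial compatibility problem, with its proliferation of alignment sub-cases, that resists a clean uniform bound and forces the computer-assisted verification.
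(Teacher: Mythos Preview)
Your plan is essentially the paper's approach: bound $M_j\in\{2,3\}$, set up the edge-deficit count against $\binom{4k+3}{2}=8k^2+10k+3$, and run a case analysis (with computer assistance for the messy sub-cases) on how much ``excess'' the parties can jointly carry without violating unextendibility. The paper's top-level split is by $\max_j C_{3,j}\in\{\ge 3,2,1,0\}$ rather than by the number of $M_j=3$ parties, and it invokes the computer search only at two narrow points (to cap the number of parties with $C_{2,j}\ge 3$), handling the remaining sub-cases by the explicit edge counts and forced-configuration pictures you anticipate; but the overall structure and the mechanisms you identify are the same.
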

\begin{proof}
	Suppose for a contradiction that there exists a UPB of cardinality $4k+3$ in $(\bb{C}^2)^{\otimes 4k}$. If there exists $1 \leq j \leq p$ such that $M_j \geq 4$, then we can find a product state that is orthogonal to at least $4$ corresponding states on party $P_j$ and to $1$ of the product states on each of the remaining $4k-1$ parties, for a total of $4k+3$ elements of the UPB, which violates unextendibility. Hence $M_j \leq 3$ for all $j$. Furthermore, this same argument shows that if there exists $i \geq 1$ such that we can choose a single shaded region on each of $i$ parties so that together they contain at least $i + 3$ vertices, then unextendibility will be violated. Finally, note that since $4k+3$ is odd, Lemma~\ref{lem:all_pos} implies that $M_j \geq 2$ for all $j$.
	
	We now split into $4$ cases, depending on the value of ${\rm max}_j\{C_{3,j}\}$ (i.e., the maximum number of sets of $3$ equal states on any party).

	\noindent {\bf Case 1:} ${\rm max}_j\{C_{3,j}\} \geq 3$.
	
	Because $M_j \geq 2$ for all $j$, it easily follows that we can find shaded regions on two parties that contain $3 + 2 = 5$ distinct vertices, which contradicts unextendibility.
	
	\noindent {\bf Case 2:} ${\rm max}_j\{C_{3,j}\} = 2$.
	
	Suppose without loss of generality that party $P_1$ is such that $C_{3,1} = 2$. Unextendibility immediately implies that $C_{3,j} = 0$ for $j \geq 2$. Since there are $4k-3$ left over vertices on party $P_1$, which is odd, there must be a copy of $K_{2,1}$ on this party, as in Figure~\ref{fig:p3_case1}. Since $v_1$ is connected to only one other state on party $P_1$, it must be connected to $2$ states on each of $2$ other parties. These sets of $2$ vertices must be disjoint and must each contain one of $v_2,v_3,v_4$ and one of $v_5,v_6,v_7$. Thus parties $P_2$ and $P_3$, without loss of generality, are as in Figure~\ref{fig:p3_case1}, which clearly implies extendibility and rules out this case.
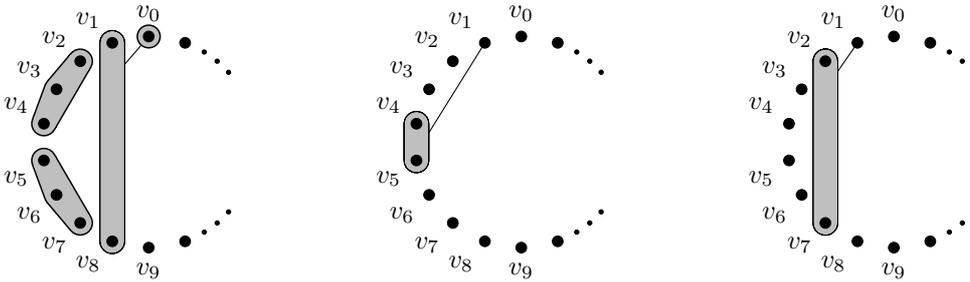
\begin{figure}[htb]
	\centering
	\begin{tikzpicture}[x=1.4cm, y=1.4cm, label distance=0cm]     
    \draw[draw=black] (0,1) -- (-0.342,0.6);

    \filldraw[fill=lightgray,line width=0.34cm,line join=round,draw=black] (-0.342,0.940) -- (-0.342,-0.940) -- (-0.3420,0) -- cycle;
    \filldraw[fill=lightgray,line width=0.30cm,line join=round,draw=lightgray] (-0.342,0.940) -- (-0.342,-0.940) -- (-0.3420,0) -- cycle;
    \filldraw[fill=lightgray,line width=0.34cm,line join=round,draw=black] (-0.643,0.766) -- (-0.866,0.500) -- (-0.985,0.174) -- cycle;
    \filldraw[fill=lightgray,line width=0.30cm,line join=round,draw=lightgray] (-0.643,0.766) -- (-0.866,0.500) -- (-0.985,0.174) -- cycle;
	\filldraw[fill=lightgray,line width=0.34cm,line join=round,draw=black] (-0.985,-0.174) -- (-0.866,-0.500) -- (-0.643,-0.766) -- cycle;
    \filldraw[fill=lightgray,line width=0.30cm,line join=round,draw=lightgray] (-0.985,-0.174) -- (-0.866,-0.500) -- (-0.643,-0.766) -- cycle;

    \draw[line width=0.02cm,draw=black,fill=lightgray] (0,1) circle (0.15cm);

		\vertex[fill] (v000) at (0,1) [label=90:$v_{0}$]{};
		\vertex[fill] (v001) at (-0.342,0.940) [label=110:$v_{1}$]{};
		\vertex[fill] (v002) at (-0.643,0.766) [label=130:$v_{2}$]{};
		\vertex[fill] (v003) at (-0.866,0.500) [label=150:$v_{3}$]{};
		\vertex[fill] (v004) at (-0.985,0.174) [label=170:$v_{4}$]{};
		\vertex[fill] (v005) at (-0.985,-0.174) [label=190:$v_{5}$]{};
		\vertex[fill] (v006) at (-0.866,-0.500) [label=210:$v_{6}$]{};
		\vertex[fill] (v007) at (-0.643,-0.766) [label=230:$v_{7}$]{};
		\vertex[fill] (v008) at (-0.342,-0.940) [label=250:$v_{8}$]{};
		\vertex[fill] (v009) at (0,-1) [label=270:$v_{9}$]{};
		\vertex[fill] (v010) at (0.342,-0.940) []{};
		\vertex[fill] (v017) at (0.342,0.940) []{};
		\draw[thin,fill] (0.750,0.661) circle(0.02);
		\draw[thin,fill] (0.643,0.766) circle(0.02);
		\draw[thin,fill] (0.521,0.853) circle(0.02);
		\draw[thin,fill] (0.521,-0.853) circle(0.02);
		\draw[thin,fill] (0.643,-0.766) circle(0.02);
		\draw[thin,fill] (0.750,-0.661) circle(0.02);

    \draw[draw=black] (3.158,0.940) -- (2.515,-0.1);

    \filldraw[fill=lightgray,line width=0.34cm,line join=round,draw=black] (2.515,0.174) -- (2.515,-0.174) -- (2.5150,0) -- cycle;
    \filldraw[fill=lightgray,line width=0.30cm,line join=round,draw=lightgray] (2.515,0.174) -- (2.515,-0.174) -- (2.5150,0) -- cycle;

		\vertex[fill] (v100) at (3.5,1) [label=90:$v_{0}$]{};
		\vertex[fill] (v101) at (3.158,0.940) [label=110:$v_{1}$]{};
		\vertex[fill] (v102) at (2.857,0.766) [label=130:$v_{2}$]{};
		\vertex[fill] (v103) at (2.634,0.500) [label=150:$v_{3}$]{};
		\vertex[fill] (v104) at (2.515,0.174) [label=170:$v_{4}$]{};
		\vertex[fill] (v105) at (2.515,-0.174) [label=190:$v_{5}$]{};
		\vertex[fill] (v106) at (2.634,-0.500) [label=210:$v_{6}$]{};
		\vertex[fill] (v107) at (2.857,-0.766) [label=230:$v_{7}$]{};
		\vertex[fill] (v108) at (3.158,-0.940) [label=250:$v_{8}$]{};
		\vertex[fill] (v109) at (3.5,-1) [label=270:$v_{9}$]{};
		\vertex[fill] (v110) at (3.842,-0.940) []{};
		\vertex[fill] (v117) at (3.842,0.940) []{};
		\draw[thin,fill] (4.250,0.661) circle(0.02);
		\draw[thin,fill] (4.143,0.766) circle(0.02);
		\draw[thin,fill] (4.021,0.853) circle(0.02);
		\draw[thin,fill] (4.021,-0.853) circle(0.02);
		\draw[thin,fill] (4.143,-0.766) circle(0.02);
		\draw[thin,fill] (4.250,-0.661) circle(0.02);

    \draw[draw=black] (6.658,0.940) -- (6.357,0.5);

    \filldraw[fill=lightgray,line width=0.34cm,line join=round,draw=black] (6.357,0.766) -- (6.357,-0.766) -- (6.357,0) -- cycle;
    \filldraw[fill=lightgray,line width=0.30cm,line join=round,draw=lightgray] (6.357,0.766) -- (6.357,-0.766) -- (6.357,0) -- cycle;

		\vertex[fill] (v200) at (7,1) [label=90:$v_{0}$]{};
		\vertex[fill] (v201) at (6.658,0.940) [label=110:$v_{1}$]{};
		\vertex[fill] (v202) at (6.357,0.766) [label=130:$v_{2}$]{};
		\vertex[fill] (v203) at (6.134,0.500) [label=150:$v_{3}$]{};
		\vertex[fill] (v204) at (6.015,0.174) [label=170:$v_{4}$]{};
		\vertex[fill] (v205) at (6.015,-0.174) [label=190:$v_{5}$]{};
		\vertex[fill] (v206) at (6.134,-0.500) [label=210:$v_{6}$]{};
		\vertex[fill] (v207) at (6.357,-0.766) [label=230:$v_{7}$]{};
		\vertex[fill] (v208) at (6.658,-0.940) [label=250:$v_{8}$]{};
		\vertex[fill] (v209) at (7,-1) [label=270:$v_{9}$]{};
		\vertex[fill] (v210) at (7.342,-0.940) []{};
		\vertex[fill] (v217) at (7.342,0.940) []{};
		\draw[thin,fill] (7.750,0.661) circle(0.02);
		\draw[thin,fill] (7.643,0.766) circle(0.02);
		\draw[thin,fill] (7.521,0.853) circle(0.02);
		\draw[thin,fill] (7.521,-0.853) circle(0.02);
		\draw[thin,fill] (7.643,-0.766) circle(0.02);
		\draw[thin,fill] (7.750,-0.661) circle(0.02);
	\end{tikzpicture}
	\caption{The (essentially unique) partial orthogonality graphs of parties $P_1$ (left), $P_2$ (center) and $P_3$ (right) in case~2 of Lemma~\ref{lem:upb_p3}. Such a product basis is necessarily extendible, as we can find a product state that is orthogonal to the states corresponding to $v_1$ and $v_8$ on party $P_1$, $v_4$ and $v_5$ on party $P_2$, $v_2$ and $v_7$ on party $P_3$, and one of the $4k-3$ remaining states on each of the remaining $4k-3$ parties.}\label{fig:p3_case1}
\end{figure}
	
	\noindent {\bf Case 3:} ${\rm max}_j\{C_{3,j}\} = 0$.
	
	Since $M_j = 2$ for all $j$, simple parity arguments show that $C_{2,j} \in \{1,3,5,\ldots\}$ for every $j$. We now split into two sub-cases, depending on the value of ${\rm max}_j \{C_{2,j}\}$ (i.e., the maximum number of sets of $2$ equal states on any party).
	
	\noindent {\bf Case 3(a):} ${\rm max}_j \{C_{2,j}\} \geq 5$.
	
	Suppose that party $P_1$ has $C_{2,1} = 5$. We first argue that there must be at least one other party $P_2$ with $C_{2,2} \geq 3$. To see this, suppose the contrary -- suppose that $C_{2,j} = 1$ for all $j \geq 2$. Then each of these $4k-1$ parties contributes at most $2k + 2$ edges to the orthogonality graph, for a total of $(4k-1)(2k+2) = 8k^2 + 6k - 2$ edges. The party $P_1$ contributes no more than $4k + 2$ edges, for a total of $8k^2 + 10k$ edges among all $4k$ parties. However, the complete graph on $4k+3$ vertices has $(4k+3)(4k+2)/2 = 8k^2 + 10k + 3$ edges, so there are at least $3$ pairs of non-orthogonal product states in our set, which contradicts the assumption that we are working with a UPB.
	
	We now pick an arbitrary party $P_3 \neq P_1,P_2$. Because $C_{2,3} \geq 1$, we are now able to choose one shaded region on each of parties $P_1,P_2,P_3$ such that $6$ vertices are contained within these regions, which shows that unextendibility is violated. To this end, we choose any shaded region on party $P_3$ that contains two vertices, then we pick any shaded region on party $P_2$ that is disjoint from the two vertices we chose on party $P_3$, and finally we choose any shaded region on party $P_1$ that is disjoint from all four of the previously-chosen vertices (see Figure~\ref{fig:p3_case2a}).

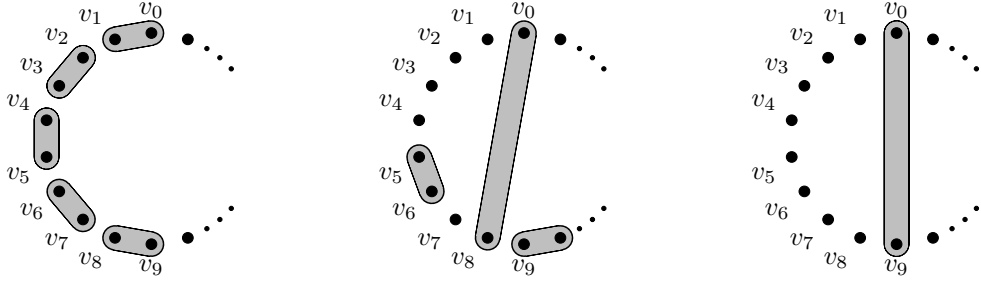
\begin{figure}[htb]
	\centering
	\begin{tikzpicture}[x=1.4cm, y=1.4cm, label distance=0cm]     
    \filldraw[fill=lightgray,line width=0.34cm,line join=round,draw=black] (0,1) -- (-0.342,0.940) -- (-0.1710,0.9700) -- cycle;
    \filldraw[fill=lightgray,line width=0.30cm,line join=round,draw=lightgray] (0,1) -- (-0.342,0.940) -- (-0.1710,0.9700) -- cycle;
    \filldraw[fill=lightgray,line width=0.34cm,line join=round,draw=black] (-0.643,0.766) -- (-0.866,0.500) -- (-0.7545,0.6330) -- cycle;
    \filldraw[fill=lightgray,line width=0.30cm,line join=round,draw=lightgray] (-0.643,0.766) -- (-0.866,0.500) -- (-0.7545,0.6330) -- cycle;
	\filldraw[fill=lightgray,line width=0.34cm,line join=round,draw=black] (-0.985,0.174) -- (-0.985,-0.174) -- (-0.9850,0) -- cycle;
    \filldraw[fill=lightgray,line width=0.30cm,line join=round,draw=lightgray] (-0.985,0.174) -- (-0.985,-0.174) -- (-0.9850,0) -- cycle;
	\filldraw[fill=lightgray,line width=0.34cm,line join=round,draw=black] (-0.866,-0.500) -- (-0.643,-0.766) -- (-0.7545,-0.6330) -- cycle;
    \filldraw[fill=lightgray,line width=0.30cm,line join=round,draw=lightgray] (-0.866,-0.500) -- (-0.643,-0.766) -- (-0.7545,-0.6330) -- cycle;
	\filldraw[fill=lightgray,line width=0.34cm,line join=round,draw=black] (-0.342,-0.940) -- (0,-1) -- (-0.1710,-0.9700) -- cycle;
    \filldraw[fill=lightgray,line width=0.30cm,line join=round,draw=lightgray] (-0.342,-0.940) -- (0,-1) -- (-0.1710,-0.9700) -- cycle;

		\vertex[fill] (v000) at (0,1) [label=90:$v_{0}$]{};
		\vertex[fill] (v001) at (-0.342,0.940) [label=110:$v_{1}$]{};
		\vertex[fill] (v002) at (-0.643,0.766) [label=130:$v_{2}$]{};
		\vertex[fill] (v003) at (-0.866,0.500) [label=150:$v_{3}$]{};
		\vertex[fill] (v004) at (-0.985,0.174) [label=170:$v_{4}$]{};
		\vertex[fill] (v005) at (-0.985,-0.174) [label=190:$v_{5}$]{};
		\vertex[fill] (v006) at (-0.866,-0.500) [label=210:$v_{6}$]{};
		\vertex[fill] (v007) at (-0.643,-0.766) [label=230:$v_{7}$]{};
		\vertex[fill] (v008) at (-0.342,-0.940) [label=250:$v_{8}$]{};
		\vertex[fill] (v009) at (0,-1) [label=270:$v_{9}$]{};
		\vertex[fill] (v010) at (0.342,-0.940) []{};
		\vertex[fill] (v017) at (0.342,0.940) []{};
		\draw[thin,fill] (0.750,0.661) circle(0.02);
		\draw[thin,fill] (0.643,0.766) circle(0.02);
		\draw[thin,fill] (0.521,0.853) circle(0.02);
		\draw[thin,fill] (0.521,-0.853) circle(0.02);
		\draw[thin,fill] (0.643,-0.766) circle(0.02);
		\draw[thin,fill] (0.750,-0.661) circle(0.02);

    \filldraw[fill=lightgray,line width=0.34cm,line join=round,draw=black] (3.5,1) -- (3.158,-0.940) -- (3.3290,0.0300) -- cycle;
    \filldraw[fill=lightgray,line width=0.30cm,line join=round,draw=lightgray] (3.5,1) -- (3.158,-0.940) -- (3.3290,0.0300) -- cycle;
    \filldraw[fill=lightgray,line width=0.34cm,line join=round,draw=black] (2.515,-0.174) -- (2.634,-0.500) -- (2.5745,-0.3370) -- cycle;
    \filldraw[fill=lightgray,line width=0.30cm,line join=round,draw=lightgray] (2.515,-0.174) -- (2.634,-0.500) -- (2.5745,-0.3370) -- cycle;
    \filldraw[fill=lightgray,line width=0.34cm,line join=round,draw=black] (3.5,-1) -- (3.842,-0.940) -- (3.6710,-0.9700) -- cycle;
    \filldraw[fill=lightgray,line width=0.30cm,line join=round,draw=lightgray] (3.5,-1) -- (3.842,-0.940) -- (3.6710,-0.9700) -- cycle;

		\vertex[fill] (v100) at (3.5,1) [label=90:$v_{0}$]{};
		\vertex[fill] (v101) at (3.158,0.940) [label=110:$v_{1}$]{};
		\vertex[fill] (v102) at (2.857,0.766) [label=130:$v_{2}$]{};
		\vertex[fill] (v103) at (2.634,0.500) [label=150:$v_{3}$]{};
		\vertex[fill] (v104) at (2.515,0.174) [label=170:$v_{4}$]{};
		\vertex[fill] (v105) at (2.515,-0.174) [label=190:$v_{5}$]{};
		\vertex[fill] (v106) at (2.634,-0.500) [label=210:$v_{6}$]{};
		\vertex[fill] (v107) at (2.857,-0.766) [label=230:$v_{7}$]{};
		\vertex[fill] (v108) at (3.158,-0.940) [label=250:$v_{8}$]{};
		\vertex[fill] (v109) at (3.5,-1) [label=270:$v_{9}$]{};
		\vertex[fill] (v110) at (3.842,-0.940) []{};
		\vertex[fill] (v117) at (3.842,0.940) []{};
		\draw[thin,fill] (4.250,0.661) circle(0.02);
		\draw[thin,fill] (4.143,0.766) circle(0.02);
		\draw[thin,fill] (4.021,0.853) circle(0.02);
		\draw[thin,fill] (4.021,-0.853) circle(0.02);
		\draw[thin,fill] (4.143,-0.766) circle(0.02);
		\draw[thin,fill] (4.250,-0.661) circle(0.02);

    \filldraw[fill=lightgray,line width=0.34cm,line join=round,draw=black] (7,1) -- (7,-1) -- (7,0) -- cycle;
    \filldraw[fill=lightgray,line width=0.30cm,line join=round,draw=lightgray] (7,1) -- (7,-1) -- (7,0) -- cycle;

		\vertex[fill] (v200) at (7,1) [label=90:$v_{0}$]{};
		\vertex[fill] (v201) at (6.658,0.940) [label=110:$v_{1}$]{};
		\vertex[fill] (v202) at (6.357,0.766) [label=130:$v_{2}$]{};
		\vertex[fill] (v203) at (6.134,0.500) [label=150:$v_{3}$]{};
		\vertex[fill] (v204) at (6.015,0.174) [label=170:$v_{4}$]{};
		\vertex[fill] (v205) at (6.015,-0.174) [label=190:$v_{5}$]{};
		\vertex[fill] (v206) at (6.134,-0.500) [label=210:$v_{6}$]{};
		\vertex[fill] (v207) at (6.357,-0.766) [label=230:$v_{7}$]{};
		\vertex[fill] (v208) at (6.658,-0.940) [label=250:$v_{8}$]{};
		\vertex[fill] (v209) at (7,-1) [label=270:$v_{9}$]{};
		\vertex[fill] (v210) at (7.342,-0.940) []{};
		\vertex[fill] (v217) at (7.342,0.940) []{};
		\draw[thin,fill] (7.750,0.661) circle(0.02);
		\draw[thin,fill] (7.643,0.766) circle(0.02);
		\draw[thin,fill] (7.521,0.853) circle(0.02);
		\draw[thin,fill] (7.521,-0.853) circle(0.02);
		\draw[thin,fill] (7.643,-0.766) circle(0.02);
		\draw[thin,fill] (7.750,-0.661) circle(0.02);
	\end{tikzpicture}
	\caption{An example of a partial orthogonality graph in case~3(a) of Lemma~\ref{lem:upb_p3}. Such a product basis is necessarily extendible, as we can choose the shaded region containing $v_0$ and $v_9$ on party $P_3$, the disjoint shaded region (i.e., the one containing $v_5$ and $v_6$) on party $P_2$, and the disjoint shaded region (i.e., the one containing $v_2$ and $v_3$) on party $P_1$, for a total of $6$ vertices on $3$ parties.}\label{fig:p3_case2a}
\end{figure}

	\noindent {\bf Case 3(b):} ${\rm max}_j \{C_{2,j}\} \leq 3$.

	We begin by noting that the brute-force computer search shows that there can be no more than $4$ distinct parties $P_j$ for which $C_{2,j} \geq 3$ \cite{JohQubitUPBCode}. Each of these four parties has at most $2k + 4$ edges in its orthogonality graph, and each of the remaining $4k-4$ parties has at most $2k + 2$ edges on its orthogonality graph, for a total of at most $4(2k+4) + (4k-4)(2k+2) = 8k^2 + 8k + 8$ edges. The complete graph on $4k+3$ vertices has $(4k+3)(4k+2)/2 = 8k^2 + 10k + 3$ edges, so when $k \geq 3$ there are not enough edges in the orthogonality graph, so the set of states does not form a product basis, which contradicts our assumption that we are working with a UPB. Note that this is the case in which the UPB of Lemma~\ref{lem:8_min_construct} arises in the $k = 2$ case, so the fact that we require $k \geq 3$ here is not surprising.

\noindent {\bf Case 4:} ${\rm max}_j\{C_{3,j}\} = 1$.
	
	By parity arguments, we see that every party $P_j$ with $C_{3,j} = 1$ must also have $C_{2,j} \in \{1,3,5,\dots\}$. Furthermore, if there exist two (or more) parties $P_1, P_2$ such that $M_1 = M_2 = 3$, then unextendibility is violated unless $C_{2,j} = 1$ whenever $M_j = 3$.
	
\noindent {\bf Case 4(a):} There exist three (or more) parties $P_1, P_2, P_3$ such that $M_1 = M_2 = M_3 = 3$.

Because there must exist a shaded region containing exactly $2$ vertices on each party $P_1$, $P_2$, $P_3$, it is easily verified that the only possible configuration of shaded regions on those parties (up to repositioning vertices and parties) that doesn't break unextendibility is the one depicted in Figure~\ref{fig:case_4a}.
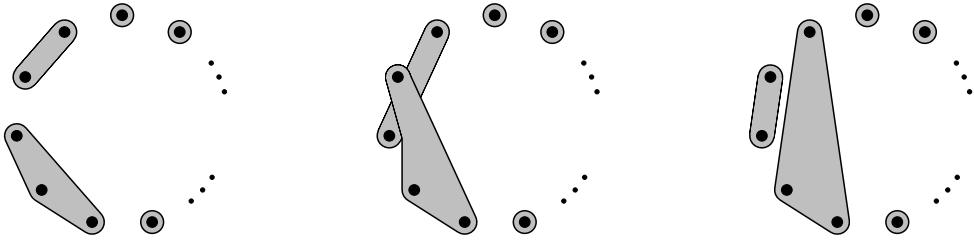
\begin{figure}[htb]
	\centering
	\begin{tikzpicture}[x=1.4cm, y=1.4cm, label distance=0cm]     
    \filldraw[fill=lightgray,line width=0.34cm,line join=round,draw=black] (-0.990,-0.142) -- (-0.756,-0.655) -- (-0.282,-0.959) -- cycle;
    \filldraw[fill=lightgray,line width=0.30cm,line join=round,draw=lightgray] (-0.990,-0.142) -- (-0.756,-0.655) -- (-0.282,-0.959) -- cycle;
    \filldraw[fill=lightgray,line width=0.34cm,line join=round,draw=black] (-0.541,0.841) -- (-0.910,0.415) -- (-0.7255,0.6280) -- cycle;
    \filldraw[fill=lightgray,line width=0.30cm,line join=round,draw=lightgray] (-0.541,0.841) -- (-0.910,0.415) -- (-0.7255,0.6280) -- cycle;

    \draw[line width=0.02cm,draw=black,fill=lightgray] (0.541,0.841) circle (0.15cm);
    \draw[line width=0.02cm,draw=black,fill=lightgray] (0,1) circle (0.15cm);
    \draw[line width=0.02cm,draw=black,fill=lightgray] (0.282,-0.959) circle (0.15cm);

		\vertex[fill] (v00) at (0,1) []{};
		\vertex[fill] (v01) at (-0.541,0.841) []{};
		\vertex[fill] (v02) at (-0.910,0.415) []{};
		\vertex[fill] (v03) at (-0.990,-0.142) []{};
		\vertex[fill] (v04) at (-0.756,-0.655) []{};
		\vertex[fill] (v05) at (-0.282,-0.959) []{};
		\vertex[fill] (v06) at (0.282,-0.959) []{};
		\vertex[fill] (v010) at (0.541,0.841) []{};
		\draw[thin,fill] (0.837,0.547) circle(0.02);
		\draw[thin,fill] (0.910,0.415) circle(0.02);
		\draw[thin,fill] (0.961,0.275) circle(0.02);
		\draw[thin,fill] (0.845,-0.535) circle(0.02);
		\draw[thin,fill] (0.756,-0.655) circle(0.02);
		\draw[thin,fill] (0.649,-0.760) circle(0.02);

    \filldraw[fill=lightgray,line width=0.34cm,line join=round,draw=black] (2.9590,0.8410) -- (2.5100,-0.1420) -- (2.7345,0.3495) -- cycle;
    \filldraw[fill=lightgray,line width=0.30cm,line join=round,draw=lightgray] (2.9590,0.8410) -- (2.5100,-0.1420) -- (2.7345,0.3495) -- cycle;
    \filldraw[fill=lightgray,line width=0.34cm,line join=round,draw=black] (2.7440,-0.1420) -- (2.5900,0.4150) -- (2.7440,-0.1420) -- (2.7440,-0.6550) -- (3.2180,-0.9590) -- (2.5900,0.4150);
    \filldraw[fill=lightgray,line width=0.30cm,line join=round,draw=lightgray] (2.7440,-0.1420) -- (2.5900,0.4150) -- (2.7440,-0.1420) -- (2.7440,-0.6550) -- (3.2180,-0.9590) -- (2.5900,0.4150);

    \draw[line width=0.02cm,draw=black,fill=lightgray] (3.782,-0.959) circle (0.15cm);
    \draw[line width=0.02cm,draw=black,fill=lightgray] (3.5,1) circle (0.15cm);
    \draw[line width=0.02cm,draw=black,fill=lightgray] (4.041,0.841) circle (0.15cm);

		\vertex[fill] (v10) at (3.5,1) []{};
		\vertex[fill] (v11) at (2.9590,0.8410) []{};
		\vertex[fill] (v12) at (2.5900,0.4150) []{};
		\vertex[fill] (v13) at (2.5100,-0.1420) []{};
		\vertex[fill] (v14) at (2.7440,-0.6550) []{};
		\vertex[fill] (v15) at (3.2180,-0.9590) []{};
		\vertex[fill] (v16) at (3.782,-0.959) []{};
		\vertex[fill] (v110) at (4.041,0.841) []{};
		\draw[thin,fill] (4.337,0.547) circle(0.02);
		\draw[thin,fill] (4.410,0.415) circle(0.02);
		\draw[thin,fill] (4.461,0.275) circle(0.02);
		\draw[thin,fill] (4.345,-0.535) circle(0.02);
		\draw[thin,fill] (4.256,-0.655) circle(0.02);
		\draw[thin,fill] (4.149,-0.760) circle(0.02);

    \filldraw[fill=lightgray,line width=0.34cm,line join=round,draw=black] (6.0900,0.4150) -- (6.0100,-0.1420) -- (6.0500,0.1365) -- cycle;
    \filldraw[fill=lightgray,line width=0.30cm,line join=round,draw=lightgray] (6.0900,0.4150) -- (6.0100,-0.1420) -- (6.0500,0.1365) -- cycle;
    \filldraw[fill=lightgray,line width=0.34cm,line join=round,draw=black] (6.4590,0.8410) -- (6.2440,-0.6550) -- (6.7180,-0.9590) -- cycle;
    \filldraw[fill=lightgray,line width=0.30cm,line join=round,draw=lightgray] (6.4590,0.8410) -- (6.2440,-0.6550) -- (6.7180,-0.9590) -- cycle;

    \draw[line width=0.02cm,draw=black,fill=lightgray] (7.282,-0.959) circle (0.15cm);
    \draw[line width=0.02cm,draw=black,fill=lightgray] (7,1) circle (0.15cm);
    \draw[line width=0.02cm,draw=black,fill=lightgray] (7.541,0.841) circle (0.15cm);

		\vertex[fill] (v20) at (7,1) []{};
		\vertex[fill] (v21) at (6.4590,0.8410) []{};
		\vertex[fill] (v22) at (6.0900,0.4150) []{};
		\vertex[fill] (v23) at (6.0100,-0.1420) []{};
		\vertex[fill] (v24) at (6.2440,-0.6550) []{};
		\vertex[fill] (v25) at (6.7180,-0.9590) []{};
		\vertex[fill] (v26) at (7.282,-0.959) []{};
		\vertex[fill] (v210) at (7.541,0.841) []{};
		\draw[thin,fill] (7.837,0.547) circle(0.02);
		\draw[thin,fill] (7.910,0.415) circle(0.02);
		\draw[thin,fill] (7.961,0.275) circle(0.02);
		\draw[thin,fill] (7.845,-0.535) circle(0.02);
		\draw[thin,fill] (7.756,-0.655) circle(0.02);
		\draw[thin,fill] (7.649,-0.760) circle(0.02);
	\end{tikzpicture}
	\caption{The (essentially unique) partial orthogonality graph that does not violate unextendibility in case 4(a).}\label{fig:case_4a}
\end{figure}

The parties $P_1, P_2, P_3$ can have no more than $(2k+5) + 2(2k+3) = 6k + 11$ distinct edges among them (since there will be a lot of overlap at the left edge of the graphs if we make each group of $3$ equal states orthogonal to the group of $2$ equal states). It is straightforward to see that none of the remaining $4k-3$ parties $P_j$ can have $M_j \geq 3$ or $C_{2,j} \geq 2$ without breaking unextendiblity. Thus those $4k-3$ parties can produce no more than $2k+2$ edges each, for a total of $6k + 11 + (4k-3)(2k+2) = 8k^2 + 8k + 5$ edges. Since $8k^2 + 8k + 5 < 8k^2 + 10k + 3$ when $k \geq 2$, there are some edges missing from the orthogonality graphs, which is a contradiction.

\noindent {\bf Case 4(b):} There exists a party $P_1$ such that $M_1 = 3$, but $M_j \leq 2$ for $j \geq 2$.

Party $P_1$ contributes at most $2k+5$ edges to the orthogonality graph, and the unextendibility requirement implies that $C_{2,j} \leq 3$ for $j \geq 2$. Suppose that there are $m$ indices $2 \leq j_1, j_2, \ldots, j_m \leq 4k$ such that $C_{2,j_i} = 3$ for $1 \leq i \leq m$ and $C_{2,j} = 1$ for all other values of $j$. Then there are at most $(2k+5) + m(2k+4) + (4k-m-1)(2k+2) = 8k^2 + 8k + 2m + 3$ total edges between all $4k$ parties. As in the previous cases, we need a total of $8k^2 + 10k + 3$ edges, which implies that $m \geq k$. We already saw via brute-force search in case~3(b) that we can't have $m \geq 5$, so we only need to rule out the $3 \leq k \leq 4$ cases.

If the group of $3$ identical states on party $P_1$ is represented by vertices $v_3, v_4$, and $v_5$ (see Figure~\ref{fig:case_3c}), then each one of the $3$ groups of $2$ identical states on the other parties must contain exactly one of $v_3, v_4$, or $v_5$. By refining our brute-force computer search to take this restriction into account, we find that there is no configuration of shaded regions that does not violate unextendibility when $m \geq 3$~\cite{JohQubitUPBCode}, so no such UPB exists when $k \geq 3$.

\begin{figure}[htb]
	\centering
	\begin{tikzpicture}[x=1.4cm, y=1.4cm, label distance=0cm]     
    \draw[draw=black] (-0.6760,-0.5853) -- (0.5190,-0.8070);

    \filldraw[fill=lightgray,line width=0.34cm,line join=round,draw=black] (-0.990,-0.142) -- (-0.756,-0.655) -- (-0.282,-0.959) -- cycle;
    \filldraw[fill=lightgray,line width=0.30cm,line join=round,draw=lightgray] (-0.990,-0.142) -- (-0.756,-0.655) -- (-0.282,-0.959) -- cycle;
    \filldraw[fill=lightgray,line width=0.34cm,line join=round,draw=black] (0.282,-0.959) -- (0.756,-0.655) -- (0.5190,-0.8070) -- cycle;
    \filldraw[fill=lightgray,line width=0.30cm,line join=round,draw=lightgray] (0.282,-0.959) -- (0.756,-0.655) -- (0.5190,-0.8070) -- cycle;

    \draw[line width=0.02cm,draw=black,fill=lightgray] (0.990,-0.142) circle (0.15cm);
    \draw[line width=0.02cm,draw=black,fill=lightgray] (-0.541,0.841) circle (0.15cm);
    \draw[line width=0.02cm,draw=black,fill=lightgray] (0,1) circle (0.15cm);
    \draw[line width=0.02cm,draw=black,fill=lightgray] (-0.910,0.415) circle (0.15cm);
    \draw[line width=0.02cm,draw=black,fill=lightgray] (0.910,0.415) circle (0.15cm);
    \draw[line width=0.02cm,draw=black,fill=lightgray] (0.541,0.841) circle (0.15cm);

		\vertex[fill] (v000) at (0,1) [label=90:$v_{0}$]{};
		\vertex[fill] (v001) at (-0.541,0.841) [label=122:$v_{1}$]{};
		\vertex[fill] (v002) at (-0.910,0.415) [label=155:$v_{2}$]{};
		\vertex[fill] (v003) at (-0.990,-0.142) [label=188:$v_{3}$]{};
		\vertex[fill] (v004) at (-0.756,-0.655) [label=221:$v_{4}$]{};
		\vertex[fill] (v005) at (-0.282,-0.959) [label=254:$v_{5}$]{};
		\vertex[fill] (v006) at (0.282,-0.959) [label=286:$v_{6}$]{};
		\vertex[fill] (v007) at (0.756,-0.655) [label=319:$v_{7}$]{};
		\vertex[fill] (v008) at (0.990,-0.142) [label=352:$v_{8}$]{};
		\vertex[fill] (v009) at (0.910,0.415) [label=25:$v_{9}$]{};
		\vertex[fill] (v010) at (0.541,0.841) [label=57:$v_{10}$]{};

    \draw[draw=black] (3.5770,-0.1200) -- (3.8540,-0.5505);

    \filldraw[fill=lightgray,line width=0.34cm,line join=round,draw=black] (2.744,-0.655) -- (4.410,0.415) -- (3.5770,-0.1200) -- cycle;
    \filldraw[fill=lightgray,line width=0.30cm,line join=round,draw=lightgray] (2.744,-0.655) -- (4.410,0.415) -- (3.5770,-0.1200) -- cycle;
    \filldraw[fill=lightgray,line width=0.34cm,line join=round,draw=black] (3.218,-0.959) -- (4.490,-0.142) -- (3.8540,-0.5505) -- cycle;
    \filldraw[fill=lightgray,line width=0.30cm,line join=round,draw=lightgray] (3.218,-0.959) -- (4.490,-0.142) -- (3.8540,-0.5505) -- cycle;

		\vertex[fill] (v100) at (3.5,1) [label=90:$v_{0}$]{};
		\vertex[fill] (v101) at (2.959,0.841) [label=122:$v_{1}$]{};
		\vertex[fill] (v102) at (2.590,0.415) [label=155:$v_{2}$]{};
		\vertex[fill] (v103) at (2.510,-0.142) [label=188:$v_{3}$]{};
		\vertex[fill] (v104) at (2.744,-0.655) [label=221:$v_{4}$]{};
		\vertex[fill] (v105) at (3.218,-0.959) [label=254:$v_{5}$]{};
		\vertex[fill] (v106) at (3.782,-0.959) [label=286:$v_{6}$]{};
		\vertex[fill] (v107) at (4.256,-0.655) [label=319:$v_{7}$]{};
		\vertex[fill] (v108) at (4.490,-0.142) [label=352:$v_{8}$]{};
		\vertex[fill] (v109) at (4.410,0.415) [label=25:$v_{9}$]{};
		\vertex[fill] (v110) at (4.041,0.841) [label=57:$v_{10}$]{};

    \draw[draw=black] (6.010,-0.142) -- (6.244,-0.655);

    \filldraw[fill=lightgray,line width=0.34cm,line join=round,draw=black] (6.010,-0.142) -- (7.990,-0.142) -- (7.0000,-0.1420) -- cycle;
    \filldraw[fill=lightgray,line width=0.30cm,line join=round,draw=lightgray] (6.010,-0.142) -- (7.990,-0.142) -- (7.0000,-0.1420) -- cycle;
    \filldraw[fill=lightgray,line width=0.34cm,line join=round,draw=black] (6.244,-0.655) -- (7.541,0.841) -- (6.8925,0.0930) -- cycle;
    \filldraw[fill=lightgray,line width=0.30cm,line join=round,draw=lightgray] (6.244,-0.655) -- (7.541,0.841) -- (6.8925,0.0930) -- cycle;

		\vertex[fill] (v200) at (7,1) [label=90:$v_{0}$]{};
		\vertex[fill] (v201) at (6.459,0.841) [label=122:$v_{1}$]{};
		\vertex[fill] (v202) at (6.090,0.415) [label=155:$v_{2}$]{};
		\vertex[fill] (v203) at (6.010,-0.142) [label=188:$v_{3}$]{};
		\vertex[fill] (v204) at (6.244,-0.655) [label=221:$v_{4}$]{};
		\vertex[fill] (v205) at (6.718,-0.959) [label=254:$v_{5}$]{};
		\vertex[fill] (v206) at (7.282,-0.959) [label=286:$v_{6}$]{};
		\vertex[fill] (v207) at (7.756,-0.655) [label=319:$v_{7}$]{};
		\vertex[fill] (v208) at (7.990,-0.142) [label=352:$v_{8}$]{};
		\vertex[fill] (v209) at (7.910,0.415) [label=25:$v_{9}$]{};
		\vertex[fill] (v210) at (7.541,0.841) [label=57:$v_{10}$]{};
	\end{tikzpicture}
	\caption{An example of a partial orthogonality graph in case~4(b).}\label{fig:case_3c}
\end{figure}
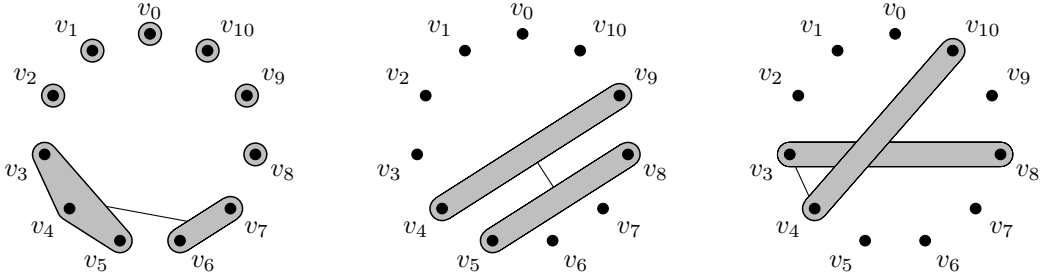

\noindent {\bf Case 4(c):} There exist two parties $P_1, P_2$ such that $M_1 = M_2 = 3$, but $M_j \leq 2$ for $j \geq 3$.

In this case, there are (up to relabelling vertices and parties) only two possible configurations of parties $P_1$ and $P_2$, which are depicted in Figures~\ref{fig:case_4ci} and~\ref{fig:case_4cii}. Notice that in Figure~\ref{fig:case_4ci}, the shaded region on party $P_1$ that contains exactly two vertices \emph{does not} share any common vertices with the shaded region on party $P_2$ that contains exactly two vertices, while in Figure~\ref{fig:case_4cii} those two regions contain the common vertex $v_1$.
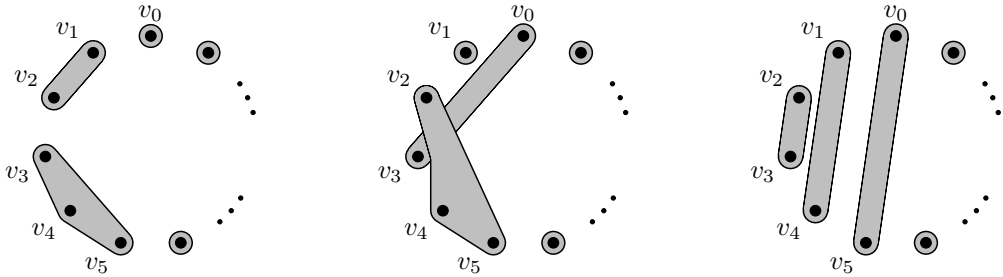
\begin{figure}[htb]
	\centering
	\begin{tikzpicture}[x=1.4cm, y=1.4cm, label distance=0cm]     
    \filldraw[fill=lightgray,line width=0.34cm,line join=round,draw=black] (-0.990,-0.142) -- (-0.756,-0.655) -- (-0.282,-0.959) -- cycle;
    \filldraw[fill=lightgray,line width=0.30cm,line join=round,draw=lightgray] (-0.990,-0.142) -- (-0.756,-0.655) -- (-0.282,-0.959) -- cycle;
    \filldraw[fill=lightgray,line width=0.34cm,line join=round,draw=black] (-0.541,0.841) -- (-0.910,0.415) -- (-0.7255,0.6280) -- cycle;
    \filldraw[fill=lightgray,line width=0.30cm,line join=round,draw=lightgray] (-0.541,0.841) -- (-0.910,0.415) -- (-0.7255,0.6280) -- cycle;

    \draw[line width=0.02cm,draw=black,fill=lightgray] (0.282,-0.959) circle (0.15cm);
    \draw[line width=0.02cm,draw=black,fill=lightgray] (0,1) circle (0.15cm);
    \draw[line width=0.02cm,draw=black,fill=lightgray] (0.541,0.841) circle (0.15cm);

		\vertex[fill] (v00) at (0,1) [label=90:$v_{0}$]{};
		\vertex[fill] (v01) at (-0.541,0.841) [label=122:$v_{1}$]{};
		\vertex[fill] (v02) at (-0.910,0.415) [label=155:$v_{2}$]{};
		\vertex[fill] (v03) at (-0.990,-0.142) [label=188:$v_{3}$]{};
		\vertex[fill] (v04) at (-0.756,-0.655) [label=221:$v_{4}$]{};
		\vertex[fill] (v05) at (-0.282,-0.959) [label=254:$v_{5}$]{};
		\vertex[fill] (v06) at (0.282,-0.959) []{};
		\vertex[fill] (v010) at (0.541,0.841) []{};
		\draw[thin,fill] (0.837,0.547) circle(0.02);
		\draw[thin,fill] (0.910,0.415) circle(0.02);
		\draw[thin,fill] (0.961,0.275) circle(0.02);
		\draw[thin,fill] (0.845,-0.535) circle(0.02);
		\draw[thin,fill] (0.756,-0.655) circle(0.02);
		\draw[thin,fill] (0.649,-0.760) circle(0.02);

    \filldraw[fill=lightgray,line width=0.34cm,line join=round,draw=black] (3.5,1) -- (2.5100,-0.1420) -- (3.0050,0.4290) -- cycle;
    \filldraw[fill=lightgray,line width=0.30cm,line join=round,draw=lightgray] (3.5,1) -- (2.5100,-0.1420) -- (3.0050,0.4290) -- cycle;
    \filldraw[fill=lightgray,line width=0.34cm,line join=round,draw=black] (2.7440,-0.1420) -- (2.5900,0.4150) -- (2.7440,-0.1420) -- (2.7440,-0.6550) -- (3.2180,-0.9590) -- (2.5900,0.4150);
    \filldraw[fill=lightgray,line width=0.30cm,line join=round,draw=lightgray] (2.7440,-0.1420) -- (2.5900,0.4150) -- (2.7440,-0.1420) -- (2.7440,-0.6550) -- (3.2180,-0.9590) -- (2.5900,0.4150);

    \draw[line width=0.02cm,draw=black,fill=lightgray] (3.782,-0.959) circle (0.15cm);
    \draw[line width=0.02cm,draw=black,fill=lightgray] (2.9590,0.8410) circle (0.15cm);
    \draw[line width=0.02cm,draw=black,fill=lightgray] (4.041,0.841) circle (0.15cm);

		\vertex[fill] (v10) at (3.5,1) [label=90:$v_{0}$]{};
		\vertex[fill] (v11) at (2.9590,0.8410) [label=122:$v_{1}$]{};
		\vertex[fill] (v12) at (2.5900,0.4150) [label=155:$v_{2}$]{};
		\vertex[fill] (v13) at (2.5100,-0.1420) [label=188:$v_{3}$]{};
		\vertex[fill] (v14) at (2.7440,-0.6550) [label=221:$v_{4}$]{};
		\vertex[fill] (v15) at (3.2180,-0.9590) [label=254:$v_{5}$]{};
		\vertex[fill] (v16) at (3.782,-0.959) []{};
		\vertex[fill] (v110) at (4.041,0.841) []{};
		\draw[thin,fill] (4.337,0.547) circle(0.02);
		\draw[thin,fill] (4.410,0.415) circle(0.02);
		\draw[thin,fill] (4.461,0.275) circle(0.02);
		\draw[thin,fill] (4.345,-0.535) circle(0.02);
		\draw[thin,fill] (4.256,-0.655) circle(0.02);
		\draw[thin,fill] (4.149,-0.760) circle(0.02);

    \filldraw[fill=lightgray,line width=0.34cm,line join=round,draw=black] (6.0900,0.4150) -- (6.0100,-0.1420) -- (6.0500,0.1365) -- cycle;
    \filldraw[fill=lightgray,line width=0.30cm,line join=round,draw=lightgray] (6.0900,0.4150) -- (6.0100,-0.1420) -- (6.0500,0.1365) -- cycle;
    \filldraw[fill=lightgray,line width=0.34cm,line join=round,draw=black] (6.4590,0.8410) -- (6.2440,-0.6550) -- (6.3515,0.0930) -- cycle;
    \filldraw[fill=lightgray,line width=0.30cm,line join=round,draw=lightgray] (6.4590,0.8410) -- (6.2440,-0.6550) -- (6.3515,0.0930) -- cycle;
	\filldraw[fill=lightgray,line width=0.34cm,line join=round,draw=black] (7,1) -- (6.7180,-0.9590) -- (6.8590,0.0205) -- cycle;
    \filldraw[fill=lightgray,line width=0.30cm,line join=round,draw=lightgray] (7,1) -- (6.7180,-0.9590) -- (6.8590,0.0205) -- cycle;

    \draw[line width=0.02cm,draw=black,fill=lightgray] (7.282,-0.959) circle (0.15cm);
    \draw[line width=0.02cm,draw=black,fill=lightgray] (7.541,0.841) circle (0.15cm);

		\vertex[fill] (v20) at (7,1) [label=90:$v_{0}$]{};
		\vertex[fill] (v21) at (6.4590,0.8410) [label=122:$v_{1}$]{};
		\vertex[fill] (v22) at (6.0900,0.4150) [label=155:$v_{2}$]{};
		\vertex[fill] (v23) at (6.0100,-0.1420) [label=188:$v_{3}$]{};
		\vertex[fill] (v24) at (6.2440,-0.6550) [label=221:$v_{4}$]{};
		\vertex[fill] (v25) at (6.7180,-0.9590) [label=254:$v_{5}$]{};
		\vertex[fill] (v26) at (7.282,-0.959) []{};
		\vertex[fill] (v210) at (7.541,0.841) []{};
		\draw[thin,fill] (7.837,0.547) circle(0.02);
		\draw[thin,fill] (7.910,0.415) circle(0.02);
		\draw[thin,fill] (7.961,0.275) circle(0.02);
		\draw[thin,fill] (7.845,-0.535) circle(0.02);
		\draw[thin,fill] (7.756,-0.655) circle(0.02);
		\draw[thin,fill] (7.649,-0.760) circle(0.02);
	\end{tikzpicture}
	\caption{One of two possible partial orthogonality graphs of parties $P_1$, $P_2$, and $P_3$ that does not violate unextendibility in case 4(c).}\label{fig:case_4ci}
\end{figure}

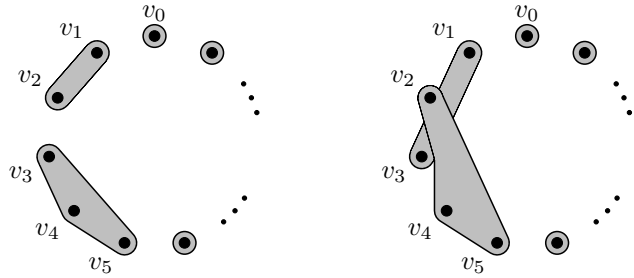
\begin{figure}[htb]
	\centering
	\begin{tikzpicture}[x=1.4cm, y=1.4cm, label distance=0cm]     
    \filldraw[fill=lightgray,line width=0.34cm,line join=round,draw=black] (-0.990,-0.142) -- (-0.756,-0.655) -- (-0.282,-0.959) -- cycle;
    \filldraw[fill=lightgray,line width=0.30cm,line join=round,draw=lightgray] (-0.990,-0.142) -- (-0.756,-0.655) -- (-0.282,-0.959) -- cycle;
    \filldraw[fill=lightgray,line width=0.34cm,line join=round,draw=black] (-0.541,0.841) -- (-0.910,0.415) -- (-0.7255,0.6280) -- cycle;
    \filldraw[fill=lightgray,line width=0.30cm,line join=round,draw=lightgray] (-0.541,0.841) -- (-0.910,0.415) -- (-0.7255,0.6280) -- cycle;

    \draw[line width=0.02cm,draw=black,fill=lightgray] (0.541,0.841) circle (0.15cm);
    \draw[line width=0.02cm,draw=black,fill=lightgray] (0,1) circle (0.15cm);
    \draw[line width=0.02cm,draw=black,fill=lightgray] (0.282,-0.959) circle (0.15cm);

		\vertex[fill] (v00) at (0,1) [label=90:$v_{0}$]{};
		\vertex[fill] (v01) at (-0.541,0.841) [label=122:$v_{1}$]{};
		\vertex[fill] (v02) at (-0.910,0.415) [label=155:$v_{2}$]{};
		\vertex[fill] (v03) at (-0.990,-0.142) [label=188:$v_{3}$]{};
		\vertex[fill] (v04) at (-0.756,-0.655) [label=221:$v_{4}$]{};
		\vertex[fill] (v05) at (-0.282,-0.959) [label=254:$v_{5}$]{};
		\vertex[fill] (v06) at (0.282,-0.959) []{};
		\draw[thin,fill] (0.837,0.547) circle(0.02);
		\draw[thin,fill] (0.910,0.415) circle(0.02);
		\draw[thin,fill] (0.961,0.275) circle(0.02);
		\draw[thin,fill] (0.845,-0.535) circle(0.02);
		\draw[thin,fill] (0.756,-0.655) circle(0.02);
		\draw[thin,fill] (0.649,-0.760) circle(0.02);
		\vertex[fill] (v010) at (0.541,0.841) []{};

    \filldraw[fill=lightgray,line width=0.34cm,line join=round,draw=black] (2.9590,0.8410) -- (2.5100,-0.1420) -- (2.7345,0.3495) -- cycle;
    \filldraw[fill=lightgray,line width=0.30cm,line join=round,draw=lightgray] (2.9590,0.8410) -- (2.5100,-0.1420) -- (2.7345,0.3495) -- cycle;
    \filldraw[fill=lightgray,line width=0.34cm,line join=round,draw=black] (2.7440,-0.1420) -- (2.5900,0.4150) -- (2.7440,-0.1420) -- (2.7440,-0.6550) -- (3.2180,-0.9590) -- (2.5900,0.4150);
    \filldraw[fill=lightgray,line width=0.30cm,line join=round,draw=lightgray] (2.7440,-0.1420) -- (2.5900,0.4150) -- (2.7440,-0.1420) -- (2.7440,-0.6550) -- (3.2180,-0.9590) -- (2.5900,0.4150);

    \draw[line width=0.02cm,draw=black,fill=lightgray] (3.782,-0.959) circle (0.15cm);
    \draw[line width=0.02cm,draw=black,fill=lightgray] (3.5,1) circle (0.15cm);
    \draw[line width=0.02cm,draw=black,fill=lightgray] (4.041,0.841) circle (0.15cm);

		\vertex[fill] (v10) at (3.5,1) [label=90:$v_{0}$]{};
		\vertex[fill] (v11) at (2.9590,0.8410) [label=122:$v_{1}$]{};
		\vertex[fill] (v12) at (2.5900,0.4150) [label=155:$v_{2}$]{};
		\vertex[fill] (v13) at (2.5100,-0.1420) [label=188:$v_{3}$]{};
		\vertex[fill] (v14) at (2.7440,-0.6550) [label=221:$v_{4}$]{};
		\vertex[fill] (v15) at (3.2180,-0.9590) [label=254:$v_{5}$]{};
		\vertex[fill] (v16) at (3.782,-0.959) []{};
		\draw[thin,fill] (4.337,0.547) circle(0.02);
		\draw[thin,fill] (4.410,0.415) circle(0.02);
		\draw[thin,fill] (4.461,0.275) circle(0.02);
		\draw[thin,fill] (4.345,-0.535) circle(0.02);
		\draw[thin,fill] (4.256,-0.655) circle(0.02);
		\draw[thin,fill] (4.149,-0.760) circle(0.02);
		\vertex[fill] (v110) at (4.041,0.841) []{};
	\end{tikzpicture}
	\caption{The other possible partial orthogonality graph of parties $P_1, P_2$ that does not violate unextendibility in case 4(c).}\label{fig:case_4cii}
\end{figure}

Suppose for now that parties $P_1$ and $P_2$ have a total of at most $4k+8$ distinct edges on their orthogonality graphs. If there are $m$ parties $P_j$ ($j \geq 3$) for which $C_{2,j} = 3$, then we have a total of at most $(4k+8) + m(2k+4) + (4k-m-2)(2k+2) = 8k^2 + 8k + 2m + 4$ edges. Any all of these $m$ parties, we require that one of the shaded regions contains $v_2$ and $v_3$ and the other shaded regions containing two vertices each contain one of $v_4$ or $v_5$. Thus, the brute-force search described in case~4(b) applies here as well and shows that $m \leq 2$. However, when $m = 2$ we have $8k^2 + 8k + 2m + 4 = 8k^2 + 8k + 8 < 8k^2 + 10k + 3$ when $k \geq 3$, which shows that there can not possibly be enough edges on the orthogonality graphs in this case.

The only remaining possibility is that the parties $P_1$ and $P_2$ have a total of at least $4k+9$ distinct edges (and hence \emph{exactly} $4k+9$ distinct edges). In this case, parties $P_1$ and $P_2$ must be as in Figure~\ref{fig:case_4ci}, and on both of the parties $P_1$ and $P_2$ the set of $3$ equal states must be orthogonal to the set of $2$ equal states. Furthermore, it is not difficult to show that in this case, any party $P_j$ with $C_{2,j} = 3$ can introduce at most $2k + 3$ new edges that are not already present in the orthogonality graph of parties $P_1$ and $P_2$. Thus, if there are $m$ parties $P_j$ ($j \geq 3$) for which $C_{2,j} = 3$, we have a total of at most $(4k+9) + m(2k+3) + (4k-m-2)(2k+2) = 8k^2 + 8k + m + 5$ edges. Since $m \leq 2$ (as before) and $k \geq 3$, it follows that $8k^2 + 8k + m + 5 < 8k^2 + 10k + 3$, which again shows that there can not possibly be enough edges on the orthogonality graphs in this case.
\end{proof}

\subparagraph*{Acknowledgements}

Thanks are extended to Gus Gutoski for suggesting a computer search to fill in the gaps in the proof of Lemma~\ref{lem:upb_p3}. The author was supported by the Natural Sciences and Engineering Research Council of Canada and the Mprime Network.

\bibliography{../../_bibliographies_/quantum}
\end{document}